\DeclareTextSymbolDefault{\DH}{T1}
\definecolor{LightCyan}{rgb}{0.88,1,1}
\definecolor{celadon}{rgb}{0.67, 0.88, 0.69}
\definecolor{columbiablue}{rgb}{0.61, 0.87, 1.0}
\newtheorem{claim}{Claim}
\newcommand{\addQEDstyle}[2]{\AtBeginEnvironment{#1}{\pushQED{\qed}\renewcommand{\qedsymbol}{#2}}
\AtEndEnvironment{#1}{\popQED}}
\newcommand{\calR}{\mathcal{R}}
\newtheorem*{rep@theorem}{\rep@title}
\newcommand{\newreptheorem}[2]{%
\newenvironment{rep#1}[1]{%
 \def\rep@title{#2 \ref{##1}}%
 \begin{rep@theorem}}%
 {\end{rep@theorem}}}
\definecolor{OKgreen}{HTML}{035925}
\definecolor{NOred}{HTML}{8F061F}
 \definecolor{OKgreen}{HTML}{6B9B00}
\definecolor{NOred}{HTML}{83004F}
\definecolor{BGblue}{HTML}{4575D4}
\definecolor{BGgrey}{HTML}{808080}
\newcommand*\rot{\rotatebox{75}}
\newcommand*\OK{\textcolor{OKgreen}{\ding{51}}}
\newcommand*\NO{\textcolor{NOred}{$\times$}}
\newcommand*\Next{\textit{Next-$k$}}
\newcommand*\First{\textit{First Majority}}
\newcommand*\Approval{\textit{Approval Voting}}
\newcommand*\Threshold{\textit{$f$-Threshold}}
\newcommand*\Majority{\textit{$0.5n$-Threshold}}
\newcommand*\Fgap{\textit{First $k$-Gap}}
\newcommand*\TopFgapSK{\textit{Top-$s$-First-$k$-Gap}}
\newcommand{\TopsFgapSK}[2]{\textit{Top-$#1$-First-$#2$-Gap}}
\newcommand*\TopFgap[2]{\textit{Top-$#1$-First-$#2$-Gap}}
\newcommand*\Ffgap{\textit{First $5$-Gap}}
\newcommand*\kAV{\textit{Multi-winner Approval Voting}}
\newcommand*\MFgap{\textit{Modified First $\ell$-Gap}}
\newcommand*\Lgap{\textit{Largest Gap}}
\newcommand*\SPAV{\textit{Size Priority}}
\newcommand*\ISPAV{\textit{Increasing Size Priority}}
\newcommand{\ispav}[1]{\textit{ISP-#1}}
\newcommand*\DSPAV{\textit{Decisive Size Priority}}
\newcommand*\qNCSA{\textit{q-NCSA}}
\newcommand*\MSThreshold{\textit{Max-Score-$f$-Threshold}}
\newtheorem{Thm}{Theorem}
\newtheorem{Prop}[Thm]{Proposition}
\theoremstyle{definition}
\newtheorem{Def}{Definition}
\newtheorem{Exp}{Example}
\newtheorem*{Exp*}{Example}
\newtheorem{Obs}{Observation}
\newtheorem{Axm}{Axiom}
\newtheorem{Rule}{Rule}
\newtheorem*{Rule*}{Rule}
\newcommand{\cR}{\mathcal{R}}
\newcommand{\score}{\mathit{sc}}
\newcommand{\fscore}{\mathit{fsc}}
\begin{document}

\title{Approval-Based Shortlisting}

\author{Martin Lackner\\
  TU Wien\\
  Vienna, Austria\\
{\small \texttt{lackner@dbai.tuwien.ac.at}}
\and
Jan Maly\\
 ILLC, University of Amsterdam\\
 Amsterdam, The Netherlands\\
{\small \texttt{j.f.maly@uva.nl}}}

\maketitle 

\begin{abstract}
Shortlisting is the task of reducing a long list of alternatives 
to a (smaller) set of best or most suitable alternatives.
Shortlisting is often used in the nomination process of awards
or in recommender systems to display featured objects.
In this paper, we analyze shortlisting methods that are based on approval data,
a common type of preferences.
Furthermore, we assume that the size of the shortlist, i.e., the number 
of best or most suitable alternatives, is not fixed but determined 
by the shortlisting method. %
We axiomatically analyze established and new shortlisting methods
and complement this analysis with an experimental evaluation based on synthetic and real-world data.
Our results lead to recommendations which shortlisting methods to use, depending on the desired properties.

\end{abstract}

\section{Introduction}

Shortlisting is a task that arises in many different scenarios and applications:
given a large set of alternatives, identify a smaller subset that consists of
the best or most suitable alternatives. 
Prototypical examples of shortlisting are awards where a winner must be selected among a vast
number of eligible candidates. In these cases, we often find a two-stage process.
In a first shortlisting step, the large number of contestants (books, films, individuals, etc.)
is reduced to a smaller number. In a second step, the remaining contestants 
can be evaluated more closely and one contestant in the smaller set is chosen to receive the award.

Both steps may involve a form of group decision making (voting),
but can also consist of a one-person or even automatic decision.
For example, the shortlist of the Booker Prize is selected by a small jury~\citep{Booker},
whereas the shortlists of the Hugo Awards are compiled based on thousands of ballots~\citep{Hugo}.
Similarly, the Baseball Writers' Association of America selects the new entries
into the Baseball Hall of Fame by voting. In that case, any 
candidate with at least 75\% approval enters the hall of fame, without a second round.
Another very common application of shortlisting is the selection
of most the promising applicants for a position who will be invited for an
interview~\citep{bovens2016selection,singh2010prospect}.
Apart from these prototypical examples, shortlisting is also useful in many less obvious applications
like the aggregation of expert opinions for example in the medical domain~\citep{clemens} or in 
risk management and assessment~\citep{tweeddale1992some}. %
Shortlisting can even be used in scenarios without agents in the traditional sense, for example
if we consider features as voters to perform an initial screening of objects,
i.e., a feature approves all objects that 
exhibit this feature~\citep{Faliszewski17OA}.

In this paper, we consider shortlisting as a form of collective decision making. %
We assume that a group of voters announce their preferences by specifying which alternatives
they individually view worthy of being shortlisted, i.e., they file approval ballots.
In practice, approval ballots are commonly used for shortlisting,
because the high number of alternatives that necessitates shortlisting
in the first place precludes the use of ranked ballots.
Furthermore, we assume that the number of alternatives to be shortlisted is not fixed
(but there might be a preferred number),
as there are very few shortlisting scenarios where there is a strong justification
for an exact size of the shortlist. 
Due to this assumption, we are not in the classical setting
of multi-winner voting~\citep{kilgour2012approval,Faliszewski2017MultiwinnerVoting,lackner2020approval},
where a fixed-size committee is selected, but in the more general setting of multi-winner voting
with a variable number of winners~\citep{kilgour2010approval,Kilgour16,Faliszewski17OA}.

In real-world shortlisting tasks, there are two prevalent methods in use: \kAV{} (selecting the $k$~alternatives with the highest approval score) and threshold rules
(selecting all alternatives approved by more than a fixed percentage of voters).
Further shortlisting methods have been proposed in the literature~\citep{Brams2012,Kilgour16,Faliszewski17OA}.
Despite the prevalence of shortlisting applications, there does not exist work on systematically choosing a suitable shortlisting method.
Such a recommendation would have to consider both expected (average-case) behavior and guaranteed axiomatic properties, and
neither have been studied previously specifically for shortlisting applications (cf.\ related work below).
Our goal is to answer this need and provide principled recommendations for shortlisting rules, depending on the properties that are desirable in the specific shortlisting process.

\smallskip
\noindent In more detail, the contributions of this paper are the following:
\begin{itemize}
\item We define shortlisting as a voting scenario and specify minimal requirements for shortlisting methods (Section~\ref{sec:formalmodel}). Furthermore, we introduce five new shortlisting methods: \Fgap{}, \Lgap, \TopFgapSK, \MSThreshold, and \SPAV{} (Section~\ref{sec:rules}).
\item We conduct an axiomatic analysis of shortlisting methods and by that identify essential differences between them. Furthermore, we axiomatically characterize \Approval, \Threshold, and the new \Fgap{} rule (Section~\ref{sec:axiomatic}).
\item We present a connection between shortlisting and clustering algorithms, as used in machine learning. We show that \Fgap{} and \Lgap{} can be viewed as instantiations of linkage-based clustering algorithms (Section~\ref{sec:clustering}).%
\item In numerical simulations using synthetic data, we approach two essential difficulties of shortlisting processes: we analyze the effect of voters with \emph{imperfect (noisy) perception} of the alternatives and the effect of \emph{biased voters}. These simulations complement our axiomatic analysis by highlighting further properties of shortlisting methods and provide additional data points for recommending shortlisting methods (Section~\ref{sec:experiments}).

\item In addition to synthetic data, we collected voting data from the Hugo Awards, which are annual awards for works in science-fiction. This data set is a real-world application of shortlisting and offers a challenging test-bed for shortlisting rules.
Using this data set, we investigate the ability of different shortlisting rules to produce short shortlists without excluding the alternative that actually won the award (Section~\ref{sec:experiments}).

\item
An open-source implementation~\citep{martin_lackner_2020_3821983} of all considered shortlisting rules and the numerical experiments is available, including the Hugo data set.

\item The recommendations based on our findings are summarized in Section~\ref{sec:concl}. In brief, our analysis leads to a recommendation of \TopFgapSK, \Threshold{}, and \SPAV{}, depending on the general shortlisting goal and desired behavior.
\end{itemize}

A preliminary version of this work has appeared in the proceedings of the 20th International Conference on Autonomous Agents and Multiagent Systems (AAMAS 2021) \citep{aamas-shortlisting}.

\subsection*{Related Work}
There are two recent papers that are particularly relevant for our work. Both \citet{Faliszewski17OA} and  \citet{prop-mwwavnow} investigate multi-winner voting with a variable number of winners. 
In contrast to our paper, the main focus of \citep{Faliszewski17OA} lies on computational complexity, which is less of a concern for our shortlisting setting (as discussed later).
The paper also contains numerical simulations related to the number of winners (which is one of the metrics we consider in our paper).
In the few cases where shortlisting rules are considered\footnote{
\citet{Faliszewski17OA} consider 
\textit{NAV} and \textit{NCSA}, which are
equivalent to \Threshold{} and \Approval{} in our paper (subject to tiebreaking), as well as 
\First{} and \qNCSA.}, their results regarding the average size of winner sets agree with our simulations (Section~\ref{sec:experiments}). 

\citet{prop-mwwavnow} study proportionality in shortlisting scenarios. 
A proportional representation of voters is incompatible with our desiderata of shortlisting rules (i.e., proportionality is incompatible with the Efficiency axiom, which we require for shortlisting rules). Thus, the rules and properties considered in~\citep{prop-mwwavnow} do not intersect with ours and are difficult to compare with.
A simplified separation between our work and theirs is the underlying assumption of fairness:
we require that the most deserving candidates are included in the shortlist (fairness towards
candidates), whereas proportionality is concerned with fairness towards voters.

There are two other notable voting frameworks with a variable number of winners.
First, shortlisting rules can be viewed as a particular type of \emph{social dichotomy functions}~\citep{duddy2014social,brandl2019axiomatic}, i.e., voting rules which partition alternatives into two groups.
Moreover, multiwinner voting with a variable number of winners can be seen as 
a special case of (binary) \emph{Judgment Aggregation}~\citep{list2012JA,Handbook-JA} 
without consistency constraints. However, both of these frameworks treat the set of selected 
winners and its complement as symmetric. This is in contrast to shortlisting,  where
we usually expect the winner set to be only a small minority of all available candidates.
For this reason, social dichotomy functions and Judgment Aggregation rules are generally
not well suited for shortlisting.

It is worth mentioning that shortlisting is
is not only studied as a form of 
collective decision making but 
also studied as a model of individual decision making.
\citet{ManziniMariotti} proposed \emph{Rational Shortlisting Methods}
as a model of human choice, which lead to number of works on shortlisting
as a decision procedure, for example \citep{dutta2015inferring}, \citep{horan2016simple},
\citep{kops2018f}, and \citep{tyson2013behavioral}.

More generally, there is a substantial literature on multi-winner voting with a \emph{fixed} number of winners (i.e., committee size), as witnessed by recent surveys~\citep{kilgour2012approval,Faliszewski2017MultiwinnerVoting,
lackner2020approval}. Multi-winner voting rules are much better understood, both from an axiomatic~\citep{elkind2017properties,Fernandez2017Proportional,
aziz2017justified,jet-consistentabc,sanchez2019monotonicity} and experimental~\citep{elkind2017multiwinner,bredereck2019experimental} point of view, also in the context of shortlisting~\citep{ijcai/mw-condorcet,bredereck2018coalitional}. Results for multi-winner rules, however, typically do not easily translate to the setting with a variable number of winners.
 
\section{The Formal Model}\label{sec:formalmodel}

In this section we describe our formal model that embeds shortlisting in a voting framework.
The model consists of two parts: a general framework for
approval-based elections with a variable number of winners~\citep{kilgour2010approval,Kilgour16,Faliszewski17OA}
on the one hand and, on the other hand, four basic axioms
that we consider essential prerequisites for shortlisting rules.

An approval-based election $E =(C,V)$ consists of a non-empty set of candidates (or alternatives)\footnote{In the following, we use the words candidate and candidate interchangeably.} $C = \{c_1,\dots, c_m\}$
and an $n$-tuple of approval ballots $V =(v_1,\dots,v_n)$ where $v_i \subseteq C$.
If $c_j \in v_i$, we say that voter~$i$ approves candidate $c_j$; if $c_j \not \in v_i$, voter~$i$ does not approve candidate $c_j$.
We interpret a voter's approval of a candidate as the preference for this candidate being included in the shortlist.
In the following we will always write $n_E$ for the number of voters and $m_E$ for 
the number candidates in an election $E$.
We will omit the subscript if $E$ is clear from the context.

The \emph{approval score} $\score_E(c_j)$ of candidate $c_j$
in election $E$ is the number of approvals of $c_j$ in $V$, i.e., $\score_E(c_j)=\lvert \{i : 1\leq i \leq n \text{ and } c_j\in v_i\}\rvert$.
We write $\score(E)$ for the vector $(\score_E(c_1), \dots$, $\score_E(c_m))$.
To avoid unnecessary case distinctions,
we only consider \emph{non-degenerate} elections:
these are elections where not all candidates have the same approval score.

An \emph{approval-based variable multi-winner rule} (which we refer to just as ``voting rule'')
is a function mapping an election $E=(C,V)$ to a subset of $C$.
Given a rule $\cR$ and an election $E$, $\cR(E)\subseteq C$ is the \emph{winner set} according to voting rule
$\cR$, i.e., $\cR(E)$ is the set of candidates which have been shortlisted.
Note that $\cR(E)$ may be empty or contain all candidates.
We refer to candidates in $\cR(E)$ as \emph{winners} or \emph{winning candidates}.

Now we introduce the basic axioms that we require every shortlisting rule to satisfy.
Anonymity and Neutrality are two basic fairness axioms for voting rules \citep{Handbook-Voting}.

\begin{Axm}[Anonymity] All voters are treated equal,
i.e., for every permutation $\pi: \{1,\dots,n\}\to \{1,\dots,n\}$ and election $E=(C,V)$ where 
$V= (v_1, \dots, v_n)$, if $E^*=(C,V^*)$ with $V^* = (v_{\pi(1)},\dots,v_{\pi(n)})$,
then $\cR(E) = \cR(E^*)$.
\end{Axm}

\begin{Axm}[Neutrality] All candidates are treated equally,
i.e., for every election $E =(C,V)$ where $V= (v_1, \dots, v_n)$
and permutation $\pi: C\to C$, if $E^* = (C, V^*)$ where $V^* =(v_1^*,\dots,v_n^*)$
with $v_i^* = \{\pi(c) \mid c \in v_i\}$,
then $\pi(c) \in \cR(E^*)$ iff $c \in \cR(E)$ for all $c\in C$.
\end{Axm}

Shortlisting differs from other multi-winner scenarios in that
we are not interested in representative or proportional 
committees. Instead, the goal is to select the most excellent
candidates. This goal is formalized in the following axiom.

\begin{Axm}[Efficiency]
No winner set can have a strictly smaller approval score than a non-winner, i.e.,
for all elections $E=(C,V)$ and all candidates $c_i, c_j \in C$ 
if $\score_E(c_i) > \score_E(c_j)$ and $c_j \in \cR(E)$
then also $c_i \in \cR(E)$.
\end{Axm}

The assumption that approval scores are approximate measures of the general quality of candidates can also be argued in a probabilistic framework: under reasonable assumptions a set of candidates with the highest approval scores coincides with the maximum likelihood estimate of the truly best candidates~\citep{procaccia2015approval}. Thus, we impose Efficiency to guarantee the inclusion of the most-likely best candidates.

Efficiency can also be argued for from the perspective of voters:
Let $\calR$ satisfy Efficiency and $W=\calR(E)$ for some election $E$. Then we claim that there does not exist a set $W'$ with $\lvert W\rvert=\lvert W'\rvert$ such that (i) $\lvert W'\cap v\rvert\geq \lvert W\cap v\rvert$ for all $v\in V$ and (ii) for some $w\in V$ $\lvert W'\cap w\rvert> \lvert W\cap w\rvert$. Otherwise $\sum_{c\in W'}\score_E(c) > \sum_{c\in W}\score_E(c)$ would hold. As $\lvert W\rvert = \lvert W'\rvert$ this implies that there is at least one candidate $c \in W' \setminus W$ with $\score_E(c) > \min\{\score_E{c'} \mid c' \in W\}$, a contradiction. In this sense, efficient shortlists are \emph{Pareto efficient among shortlists of the same size}.

It is also worth noting that Efficiency rules out proportional voting rules. It is easy to see why: a proportional selection of winner sets has to contain candidates supported by (sufficiently sized) minorities. As Efficiency demands that majority candidates are always to be preferred, any sensible notion of proportionality clashes with Efficiency.

The last of our basic axioms is Non-tiebreaking.
Since the number of winners is variable in our setting, there is generally no 
need to break ties. Because tiebreaking is usually an arbitrary  
and unfair process,
voting rules should not introduce unnecessary tiebreaking.
This idea yields our fourth axiom:

\begin{Axm}[Non-tiebreaking]
If two candidates have the same approval score, either both or neither should be winners.
That is, for all elections $E=(C,V)$ and all candidates $c_i$ and $c_j$
if $\score_E(c_i) = \score_E(c_j)$ then
either $c_i,c_j \in \cR(E)$ or $c_i,c_j \not \in \cR(E)$.
\end{Axm}

We postulate these four axioms as the minimal requirements for a voting rule to be 
considered a shortlisting rule in our sense.

\begin{Def}\label{def:shortlisting}
An approval-based variable multi-winner rule is
a shortlisting rule if it satisfies Anonymity, Neutrality, Efficiency and is non-tiebreaking.
\end{Def}

Observe that Non-tiebreaking and Efficiency are axioms that are only interesting 
if we consider voting with a variable number of winners.
Clearly, no voting rule for voting with a fixed number of winners can be non-tiebreaking.
Furthermore, except for the issue of how to break ties, there is exactly one voting rule
for approval voting with a fixed number $k$ of winners that satisfies Efficiency,
namely picking the $k$ candidates with maximum approval score (\kAV).

A consequence of Efficiency and Non-tiebreaking is that a shortlisting rule only has
to decide how many winners there should be.
This reduces the complexity of finding the winner set drastically 
as there are only linearly many possible winner sets, in contrast to the exponentially many subsets of $C$.

\begin{Obs}\label{Obs1}
For every election there are at most $m +1$ sets that can be winner sets under a shortlisting rule.
\end{Obs}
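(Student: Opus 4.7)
The plan is to show that the combination of Efficiency and Non-tiebreaking forces every winner set of a shortlisting rule to be a ``score-threshold'' set, and then to count how many such sets exist for a given election.

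First, I would fix an election $E=(C,V)$ and partition $C$ into score-equivalence classes: let $s_1 > s_2 > \dots > s_k$ be the distinct values of $\score_E(\cdot)$ appearing in $E$, and let $S_i = \{c \in C : \score_E(c) = s_i\}$. Clearly $k \leq m$ and the $S_i$ partition $C$. Non-tiebreaking then immediately gives that for any shortlisting rule $\cR$ and each $i$, either $S_i \subseteq \cR(E)$ or $S_i \cap \cR(E) = \emptyset$, so $\cR(E)$ is a union of some of the classes $S_i$.

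Next, I would use Efficiency to show that this union must be an ``initial segment'' of the classes when ordered by decreasing score. Concretely, suppose $S_j \subseteq \cR(E)$ for some $j$; then for any $i < j$ and any $c \in S_i$, $c' \in S_j$, we have $\score_E(c) = s_i > s_j = \score_E(c')$ with $c' \in \cR(E)$, so Efficiency forces $c \in \cR(E)$, hence $S_i \subseteq \cR(E)$. Therefore $\cR(E) = S_1 \cup S_2 \cup \dots \cup S_t$ for some $t \in \{0, 1, \dots, k\}$ (with $t=0$ corresponding to $\cR(E) = \emptyset$).

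This yields at most $k+1 \leq m+1$ possible winner sets, establishing the observation. There is no real obstacle here; the only thing to be mindful of is to handle the boundary case $\cR(E) = \emptyset$ correctly (which corresponds to the cutoff $t=0$) and to note that we have actually bounded the number of sets that could \emph{ever} arise as $\cR(E)$ across all shortlisting rules on this particular $E$, which is even stronger than required.
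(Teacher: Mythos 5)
Your proof is correct and matches the argument the paper intends (the paper states the observation without a formal proof, relying on the preceding remark that Efficiency and Non-tiebreaking force the winner set to be a prefix of the candidates ordered by approval score). Your decomposition into score-equivalence classes, with Non-tiebreaking giving all-or-nothing inclusion per class and Efficiency giving the initial-segment structure, is exactly this reasoning made explicit, and your sharper bound of $k+1$ for $k$ distinct scores is consistent with the paper's Example~\ref{Exp1}, which counts seven possible winner sets for eight candidates.
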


\section{Shortlisting Rules}\label{sec:rules}

In the following, we define the 
shortlisting rules that we study in this paper.
We define these rules by specifying which properties a candidate has to satisfy to be contained in the winner sets.
As before, let $E=(C,V)$ be an election.
We assume additionally that
$c_1, \dots, c_m$ is an enumeration of the candidates
in descending order of approval score, i.e.,
such that $\score_E(c_{i-1}) \geq \score_E(c_i)$ for all $2 \leq i \leq m$.
We will illustrate all rules on the following example:

\begin{Exp}\label{Exp1}
Let $E = (C,V)$ be an election with $10$ voters and $8$ candidates $c_1, \dots, c_8$.
The scores are given by $\score(E) = (10,10,9,8,6,3,3,0)$. This instance is illustrated
in Figure~\ref{fig:Exp1}.
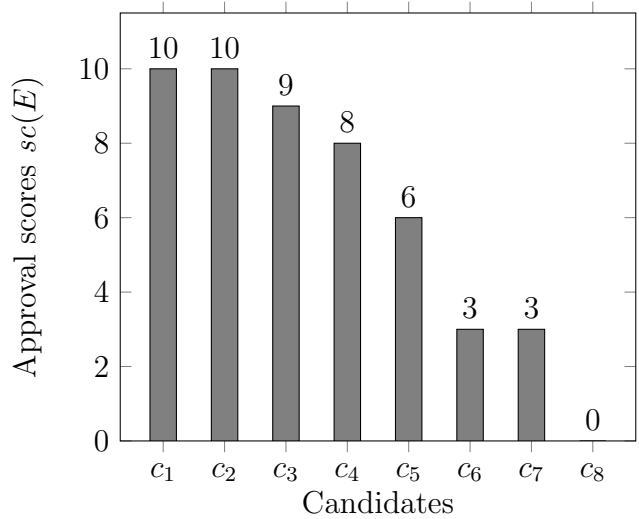
\begin{figure}
\begin{center}
\begin{tikzpicture}
\begin{axis}[
    ybar,
    ymin=0,
    ymax= 11.5,
    ylabel={Approval scores $\score(E)$},
    xlabel={Candidates},
    bar width = 10pt,
    symbolic x coords={$c_1$,$c_2$,$c_3$,$c_4$,$c_5$,$c_6$,$c_7$,$c_8$},
    xtick=data,
    nodes near coords,
    nodes near coords align={vertical},
    ]
\addplot[fill = BGgrey] coordinates {($c_1$,10) ($c_2$,10) ($c_3$,9) ($c_4$,8) ($c_5$,6) ($c_6$,3) ($c_7$,3) ($c_8$,0)};
\end{axis}
\end{tikzpicture}
\end{center}
\caption{Illustration of Example~\ref{Exp1}}
\label{fig:Exp1}
\end{figure}
There are seven possible winner sets for a shortlisting rule: $\emptyset$, $\{c_1,c_2\}$, $\{c_1, c_2, c_3\}$, $\{c_1, c_2, c_3, c_4\}$, $\{c_1, c_2, \dots, c_5\}$, $\{c_1, c_2, \dots, c_7\}$, $\{c_1, c_2, \dots, c_8\}$.
\end{Exp}

\subsection{Established Rules}

First we introduce the shortlisting rules that are either commonly used in practice
or have been proposed in the literature before.
A natural idea
is to select all most-approved candidates.
The corresponding winner set equals the set of co-winners
of classical Approval Voting \citep{Brams78}.

\begin{Rule}[Approval Voting]
A candidate~$c$ is a winner iff $c$'s approval score is maximal, i.e.,
$c \in \cR(E)$ iff $\score_E(c)  = \max(\score(E))$.
\end{Rule}

The winners under \Approval{} in Example~\ref{Exp1} are
$c_1$ and $c_2$ as they both have the highest score.

Another natural way to determine the winner set is 
to fix some percentage threshold and declaring
all alternatives to be winners that surpass this approval threshold~\citep{kilgour2010approval}.
For example, for a baseball player to be entered into the Hall of Fame, more than 75\% of the members of the Baseball Writers' Association of America 
have to approve this nomination~\citep{Baseball}.
Such rules are known as quota rules in judgment aggregation~\citep{Handbook-JA}.

\begin{Rule}[$f$-Threshold]
Let $f:\mathbb{N} \to \mathbb{N}$ be a function
such that $0 < f(\lvert V\rvert) < \lvert V\rvert$.
Then, $c \in \cR(E)$ 
for an alternative $c \in C$
if and only if $\score_E(c) > f(\lvert V\rvert)$.
We write $\alpha$-Threshold for a constant $0\leq \alpha < 1$ to denote the $f$-Threshold rule with $f(n)=\lfloor\alpha\cdot n\rfloor$.
\end{Rule}

Consider for example $f(\lvert V\rvert) = \frac{\lvert V\rvert}{2}$. Then
an alternative is a winner if it is approved by more than 50\% of all voters.
In Example~\ref{Exp1} this would mean that the winner set contains all candidates
with 6 or more approvals, i.e., $c_1, \dots, c_5$.

A sensible modification of \Threshold{} would be to select all alternatives
with an above-average approval score, i.e., the set of winners consists of all alternatives $c$
with $\score_E(c)>\frac{1}{m}\cdot \sum_{c'\in C} \score_E(c')$.
This rule is also a shortlisting rule in our sense. However, as it will, in expectation,
select half of the available candidates, we do not think that it is a reasonable rule in most
shortlisting settings. Therefore, we do not study it and only mention
that it might be a good rule in other voting contexts with a variable number of winners.
For example, Duddy et al.~\citep{duddy2016} analyzed this rule and concluded that it is 
the best rule for partitioning alternatives into homogeneous groups (see also the axiomatic characterization of this rule in~\citep{brandl2019axiomatic}). 

Another natural modification is to base the threshold not on the number of voters
but on the highest approval score achieved by a candidate. We call this 
\MSThreshold. This variant of \Threshold{} turns out to be well suited
to shortlisting as it formalizes the goal of selecting all candidates
that are close to the top.

\begin{Rule}[Max-Score-$f$-Threshold]
Let $f:\mathbb{N} \to \mathbb{N}$ be a function
such that $0 < f(x) < x$.
Then, $c \in \cR(E)$ 
for an alternative $c \in C$
if and only if $\score_E(c) > f(\max \score(E))$.
We write Max-Score$\alpha$-Threshold for a constant $0\leq \alpha < 1$  to denote the Max-Score-$f$-Threshold rule with $f(n)=\lfloor\alpha\cdot n\rfloor$.
\end{Rule}

We observe that $c_1$ and $c_2$ in Example~\ref{Exp1} have score $n$, hence 
\Threshold{} and \MSThreshold{} coincide on the example.

The next three rules are further shortlisting methods that have been proposed in the literature.
\First{} \citep{Kilgour16} includes as many alternatives as necessary to comprise more than half of all approvals.
The following definition deviates slightly from the original definition
\citep{Kilgour16}
in that it is non-tiebreaking.

\begin{Rule}[First Majority]
Let $i$ be the smallest index such that 
$\sum_{j \leq i} \score_E(c_j) > \sum_{j > i} \score_E(c_j)$.
Then $c \in \cR(E)$ if and only if $\score_E(c)\geq \score_E(c_i)$.
\end{Rule}

The candidates in Example~\ref{Exp1} together have 49 approvals. 
Therefore, a shortlist needs at least 25 approvals to be the First Majority winner set.
The smallest shortlist to achieve at least 25 approvals is $\{c_1, c_2,c_3\}$ with 29 approvals.

Next-$k$ \citep{Brams2012} is a rule that includes alternatives starting with the highest approval score, until a major drop in the approval scores is encountered, more precisely, if the total approval score of the next $k$ alternatives is less than the score of the previous alternative.

\begin{Rule}[Next-$k$]
Let $k$ be a positive integer. Then,
$c_i \in \cR(E)$ if for all $i' < i$ it holds that
$\score_E(c_{i'}) \leq \sum_{j=1}^{k} \score_E(c_{i'+j})$,
where $\score_E(c_{i' +j}) = 0$ if $i' +j> m$.
\end{Rule}

Consider Next-$2$. Then it is easy to check that, in Example~\ref{Exp1},
for all $i \leq 5$ the score of
$c_i$ is smaller or equal the sum of the scores of the next two candidates. 
For example $\score_E(c_1) = 10 \leq 19 = \score_E(c_2) + \score_E(c_3)$.
On the other hand $\score_E(c_7) = 3 \leq 0 = \score_E(c_8) + 0$.
Therefore, the winner set under Next-2 is $\{c_1, \dots, c_7\}$.

Observe that for both \Next{} and \First{} the winner set does not depend on the chosen enumeration of alternatives.
This will hold for all voting rules introduced in this paper.

\citet{Faliszewski17OA} discuss several other rules that satisfy
our basic axioms called Capped Satisfaction Approval Voting (CSA),
Net Approval Voting (NAV) and Net Capped Satisfaction Approval Voting (NCSA)
which were originally proposed by \citet{Brams2012} and \citet{BramsKilgour15} as 
well as generalizations of these rules. Among these \citet{Faliszewski17OA} conclude
that only the following generalization of NCSA is practical. 

\begin{Rule}[q-NCSA]
Let $q \in [0,1]$ be a real number and $S \subseteq C$ a set of candidates.
Then we define the \qNCSA-score of $S$ as:
\[\score_E^{\qNCSA}(S)=\sum_{v \in V}\left(\frac{\lvert S \cap A_v\rvert}{\lvert S\rvert^q} - \frac{\lvert S \cap \overline{A_v}\rvert}{\lvert S\rvert^q}\right).\]
The winner set then is the largest set with a maximum \qNCSA-score.\footnote{In the original definition of \citet{Faliszewski17OA}, all sets with maximum \qNCSA-score are co-winners. We specifically select the largest set with maximum \qNCSA-score as this choice guarantees that the winner set is non-tiebreaking (see discussion after Proposition~\ref{prop:qNCSA}).}
\end{Rule}

It is not immediately obvious that \qNCSA{} is a shortlisting rule in our sense. 
The following proposition shows that this is indeed the case and establishes some further
key properties of \qNCSA.

\begin{Prop}\label{prop:qNCSA}
The $q$-NCSA rule has the following properties for all $q \in [0,1]$.
\begin{enumerate}
\item It holds that $\score_E^{\qNCSA}(S)=\frac{1}{\lvert S\rvert^q}\sum_{c \in C}(2\score_E(c) - n)$.
\item $q$-NCSA is a shortlisting method.
\item $q$-NCSA can be computed in polynomial time.
\end{enumerate}
\end{Prop}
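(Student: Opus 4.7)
My plan addresses the three parts in order, with the closed form from Part~1 doing most of the work.

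For Part~1, I would derive the formula by exchanging the order of summation: $\sum_{v\in V}|S\cap A_v| = \sum_{c\in S}\score_E(c)$ and $\sum_{v\in V}|S\cap\overline{A_v}| = \sum_{c\in S}(n-\score_E(c))$. Taking the difference and factoring out $1/|S|^q$ yields $\score_E^{\qNCSA}(S) = \frac{1}{|S|^q}\sum_{c\in S}(2\score_E(c)-n)$. (I read the sum on the right-hand side of the statement as being over $S$ rather than over $C$; taken over $C$ it would not depend on $S$ beyond $|S|$.)

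For Part~2, I would verify the four shortlisting axioms using this closed form. Anonymity is immediate because the approval scores depend only on the multiset of ballots, and Neutrality holds because the formula depends only on the scores of candidates in $S$ together with $|S|$. For Efficiency I would use a one-swap argument: if $W = \cR(E)$ with $c_j\in W$, $c_i\notin W$, and $\score_E(c_i)>\score_E(c_j)$, then $W' = (W\setminus\{c_j\})\cup\{c_i\}$ has the same size but a strictly larger numerator, hence strictly larger $\qNCSA$-score, contradicting the maximality of $W$. For Non-tiebreaking I would exploit that $\score_E^{\qNCSA}(S)$ is a symmetric function of the multiset of scores in $S$: if two candidates have equal approval score, they play interchangeable roles in every $\qNCSA$-score computation, so the ``largest maximum-score set'' tiebreaker spelled out in the footnote forces them to be included or excluded together.

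For Part~3, the Part~1 formula reduces the optimization to choosing the best size $k$: the maximum of $\sum_{c\in S}(2\score_E(c)-n)$ over $|S|=k$ is attained by the top-$k$ candidates by score. Hence a single sort in $O(m\log m)$ combined with prefix sums of $2\score_E(c)-n$ lets us evaluate the maximum $\qNCSA$-score for every $k\in\{0,1,\dots,m\}$ in constant time and then pick an optimal~$k$, giving an $O(m\log m)$ algorithm overall.

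The most delicate step will be Non-tiebreaking: the phrase ``largest set with maximum $\qNCSA$-score'' is genuinely ambiguous when two candidates with identical scores straddle the boundary of the optimal size, and the argument must pin down the tiebreaker precisely so that any pair of tied candidates ends up either both inside or both outside the winner set. Efficiency and the polynomial-time bound then follow quite directly once the closed form of Part~1 is in hand.
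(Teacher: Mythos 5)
Your Part~1 (including the correct reading of $\sum_{c\in C}$ as $\sum_{c\in S}$, which is a typo in the statement), your one-swap argument for Efficiency, and your Part~3 reduction to prefix sets all match the paper's proof. The gap is in Non-tiebreaking. Your symmetry observation only shows that if the largest maximum-score set $W$ contains $c_i$ but not $c_{i+1}$ with $\score_E(c_i)=\score_E(c_{i+1})$, then $W'=(W\setminus\{c_i\})\cup\{c_{i+1}\}$ is \emph{another} maximum-score set of the \emph{same} cardinality. That does not force both candidates into the winner set; it only shows that ``the largest set with maximum $\qNCSA$-score'' would be ill-defined in that situation. To rule the situation out you must prove a genuinely quantitative statement: writing $g(i)=\score_E^{\qNCSA}(\{c_1,\dots,c_i\})$, you need that whenever $\score_E(c_i)=\score_E(c_{i+1})$ and $g(i-1)\le g(i)$, then also $g(i)\le g(i+1)$, so that the strictly larger prefix $\{c_1,\dots,c_{i+1}\}$ also attains the maximum and the ``largest'' set cannot cut through a tied group. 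Nothing about interchangeability gives this; a priori $g$ could peak exactly at $i$ with $g(i+1)<g(i)$.

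This is exactly the content of the paper's Claim~1, whose proof is a page-long computation in the appendix. The essential ingredient, absent from your proposal, is the concavity of $x\mapsto x^q$ for $q\in[0,1]$, used in the form $i^q-(i-1)^q\ \ge\ (i+1)^q-i^q$: one manipulates the hypothesis $\frac{x}{(i-1)^q}\le\frac{x+y}{i^q}$ (with $x=\sum_{k<i}(2\score_E(c_k)-n)$ and $y=2\score_E(c_i)-n=2\score_E(c_{i+1})-n$) into $\frac{x+y}{i^q}\le\frac{x+2y}{(i-1)^q+2z}$ where $z=i^q-(i-1)^q$, and then uses the concavity bound to replace $(i-1)^q+2z$ by the larger-or-equal quantity $(i+1)^q$ is bounded by, yielding $g(i)\le g(i+1)$. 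Note that this is also where the restriction $q\in[0,1]$ enters; your sketch never uses it, which is a sign that the argument as proposed cannot be complete. You correctly flagged Non-tiebreaking as the delicate step, but the proposal does not supply the idea needed to close it.
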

\begin{proof}
We prove the three statements separately:
\begin{enumerate}
\item Let $\mathbb{1}$ be the indicator function. Then we
can write the \qNCSA-score also as follows:
\begin{align*}
&\sum_{v \in V}\left(\frac{\lvert S \cap A_v\rvert}{\lvert S\rvert^q} - \frac{\lvert S \cap \overline{A_v}\rvert}{\lvert S\rvert^q}\right)
= \frac{1}{\lvert S\rvert^q} \sum_{v \in V}\left(\sum_{c\in S}\mathbb{1}_{c\in v} -
\sum_{c\in S}(1-\mathbb{1}_{c\in v})\right) =\\
&\frac{1}{\lvert S\rvert^q}\sum_{c \in S} \left(\sum_{v\in V}\mathbb{1}_{c\in v} -
\sum_{v\in V}(1-\mathbb{1}_{c\in v})\right) =
\frac{1}{\lvert S\rvert^q}\sum_{c \in S}(\score_E(c) - (n - \score_E(c))) =\\
& \frac{1}{\lvert S\rvert^q}\sum_{c \in S}(2\score_E(c) - n).
\end{align*}

\item It is clear that \qNCSA{} satisfies Anonymity and Neutrality. Consider Efficiency:
Assume there are two candidates $c_i$ and $c_j$ such that $\score_E(c_i) < 
\score_E(c_j)$, $c_i \in R(E)$ and $c_j \not\in R(E)$. Then there must be a $S \subseteq C$
with $c_i \in S$ and $c_j \not \in S$ which has maximal \qNCSA-score. However, by definition
the \qNCSA-score of $(S\setminus \{c_1\}) \cup \{c_j\}$ is higher than the \qNCSA-score of $S$.
A contradiction.

The non-tiebreaking property follows from the following claim:
\begin{claim}\label{claim1}
If $\score_E(c_i) = \score_E(c_{i+1})$
and \[\score_E^{\qNCSA}(\{c_1, \dots, c_{i-1}\}) \leq \score_E^{\qNCSA}(\{c_1, \dots, c_{i}\}),\]
then also 
\[\score_E^{\qNCSA}(\{c_1, \dots, c_{i}\}) \leq \score_E^{\qNCSA}(\{c_1, \dots, c_{i+1}\}).\]
\end{claim}
By Efficiency, the largest set with maximum \qNCSA-score is of the form $\{c_1, \dots,c_i\}$
for some $c_i$. As the set has maximum \qNCSA-score, it holds in particular that
$\score_E^{\qNCSA}(\{c_1, \dots, c_{i-1}\}) \leq \score_E^{\qNCSA}(\{c_1, \dots, c_{i}\})$.
If $\score_E(c_i) = \score_E(c_{i+1})$, i.e., if $\{c_1, \dots,c_i\}$ breaks a tie,
then it follows from the claim that $\score_E^{\qNCSA}(\{c_1, \dots, c_{i}\}) \leq
 \score_E^{\qNCSA}(\{c_1, \dots, c_{i+1}\})$.
However, this is a contradiction to the assumption that $\{c_1,\dots,c_i\}$ was the largest set
with maximal \qNCSA-score. The proof of Claim~\ref{claim1} contains a lengthy calculation and can be found in the appendix.

\item
As we have shown that \qNCSA is a shortlisting rules, we know that we only need to consider sets that are efficient and non-tiebreaking.
Further, we can clearly compute the \qNCSA-score of a set in polynomial time.
As there are at most linearly many potential winner sets (Observation~\ref{Obs1}), finding the one with maximum
\qNCSA-score can be done in polynomial time.
\end{enumerate}
\end{proof}

\noindent
Consider again Example~\ref{Exp1}. Then, the $0.5$-NCSA-score of the shortlist
$\{c_1, \dots, c_4\}$ is
$({(2\cdot10 -10) + (2\cdot10 -10) + (2\cdot9 -10) + (2\cdot8 -10)})/{\sqrt{4}} =  17.$
It can be checked that this is the unique maximal $0.5$-NCSA-score and hence
$\{c_1, \dots, c_4\}$ is the winner set under $0.5$-\textit{NCSA}.

\begin{Obs}\label{obs:qncsa}
An important feature (and downside) of \qNCSA{} is that candidates with an approval score of less than $\nicefrac n 2$ can only decrease $\score_E^{\qNCSA}(S)$. Consequently, \qNCSA{} returns the empty in elections where all candidates have few approvals.
\end{Obs}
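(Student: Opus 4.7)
The plan is to apply item (1) of Proposition~\ref{prop:qNCSA}, which rewrites the score as $\score_E^{\qNCSA}(S) = \frac{1}{\lvert S\rvert^q}\sum_{c \in S}(2\score_E(c) - n)$. From this closed form it is immediate that any candidate $c$ with $\score_E(c) < \nicefrac{n}{2}$ contributes the strictly negative quantity $2\score_E(c) - n$ to the sum in the numerator; so including such candidates pushes the numerator downwards, and, whenever the numerator is already non-negative, the normalized score strictly decreases upon their inclusion. This establishes the first, qualitative half of the observation.

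For the consequence about the empty winner set, I would argue as follows. Suppose every candidate in $C$ has approval score strictly less than $\nicefrac{n}{2}$. Then for every non-empty subset $S \subseteq C$ the sum $\sum_{c \in S}(2\score_E(c) - n)$ is a sum of strictly negative integers and is therefore strictly negative; since $\lvert S\rvert^q > 0$, this gives $\score_E^{\qNCSA}(S) < 0$. Adopting the natural convention that the empty sum is zero, so that $\score_E^{\qNCSA}(\emptyset) = 0$, the empty set strictly dominates every non-empty set, and is therefore the unique, and hence the largest, set achieving the maximum $\qNCSA$-score. Consequently \qNCSA{} returns $\emptyset$.

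The main obstacle, and the point that requires some care in the write-up, is that the phrase ``can only decrease $\score_E^{\qNCSA}(S)$'' is not literally true if read as a monotonicity statement about arbitrary sets: because of the $\lvert S\rvert^q$ normalization, adding a low-score candidate to a set whose numerator is already very negative can in fact make the ratio less negative, and thus formally increase the score. I would therefore explicitly clarify in the proof that the intended reading is that candidates with $\score_E(c) < \nicefrac{n}{2}$ contribute a strictly negative term to the unnormalized numerator, which is exactly the property used in the derivation of the corollary above and which is fully in line with the informal spirit of the observation.
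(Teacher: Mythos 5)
Your proposal is correct and follows exactly the route the paper intends: the observation is an immediate consequence of the closed form $\score_E^{\qNCSA}(S)=\frac{1}{\lvert S\rvert^q}\sum_{c \in S}(2\score_E(c) - n)$ from Proposition~\ref{prop:qNCSA}(1), together with the convention $\score_E^{\qNCSA}(\emptyset)=0$ that the paper also uses implicitly (e.g.\ in its Unanimity argument), so that every non-empty set has strictly negative score when all candidates fall below $\nicefrac{n}{2}$ and the empty set is the unique maximizer. Your caveat that ``can only decrease'' fails as a literal monotonicity claim for sets whose numerator is already negative is a fair and careful refinement, but it does not affect the argument, since only sets with non-negative score compete for the maximum.
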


\subsection{New Shortlisting Rules}

Let us now introduce some new shortlisting rules.
Similarly to \Next, the next two rules are based on the idea that one wants to make the cut
between winners and non-winners in a place where there is a large gap in
the approval scores. This can either be the overall largest gap or the first sufficiently large gap.

\begin{Rule}[Largest Gap]
Let $i$ be the smallest index such that 
$\score_E(c_{i}) - \score_E(c_{i+1}) = \max_{j < m} (\score_E(c_j) - \score_E(c_{j+1}))$.
Then $c \in \cR(E)$ if and only if $\score_E(c)\geq \score_E(c_i)$.
\end{Rule}

Note that in this definition a smallest index is guaranteed to exist due to our assumption that profiles are non-degenerate.
In Example~\ref{Exp1} the two largest gaps are between $c_5$ and $c_6$ and $c_7$ and $c_8$,
both of size $3$. As we pick the smaller index, the winner set
is $\{c_1, \dots, c_5\}$.

\begin{Rule}[First $k$-Gap]
Let $i$ be the smallest index such that 
$\score_E(c_i) - \score_E(c_{i+1}) \geq k$.
Then $c \in \cR(E)$ if and only if $\score_E(c)\geq \score_E(c_i)$.
If no such index exists, then $\cR(E) = C$, i.e., every alternative is a winner.
\end{Rule}

Let us consider \textit{First $2$-Gap} in Example~\ref{Exp1}.
The gaps between $c_1$ and $c_2$, $c_2$ and $c_3$ as well as between $c_3$ and $c_4$
are smaller than two, while the gap between $c_4$ and $c_5$ is $2$.
Therefore the winner set
is $\{c_1,c_2, c_3, c_4\}$.

The parameter $k$ has to capture what it means in a given shortlisting scenario
that there is a sufficiently large gap between alternatives, which
in particular depends on the number of voters $\lvert V\rvert$.
If no further information is available, one can choose $k$ by a
simple probabilistic argument.
Assume, for example, alternative $c$'s approval score is binomially distributed $\score_E(c)\sim B(n, q_c)$, where $n$ is the number of voters and $q_c$ can be seen as $c$'s quality.
We choose $k$
such that the probability of events of the following type 
are smaller than a selected threshold $\alpha$:
two alternatives $a$ and $b$ have the same objective quality ($q_a=q_b$) but have a difference in their approval scores of $k$ or more.
In such a case, the \Fgap{} rule might choose one alternative and not the other even though they are equally qualified, which is an undesirable outcome.
For example, if $n= 100$ and we want $\alpha = 0.5$, we have to choose $k \geq 5$
and if we want $\alpha = 0.1$ we need $k \geq 12$.
Note that this argument leads to rather large $k$-values; if further assumptions about the distribution of voters can be made, smaller $k$-values are feasible.

The voting rules above output winner sets of very different sizes (as we will see in the experimental evaluation, Section~\ref{sec:experiments}).
It is a common case, however, that there is a preferred size for the winner set,
but this size can be varied in order to avoid tiebreaking. This flexibility is especially 
crucial if the electorate is small and ties are more frequent.
Based on real-world shortlisting processes, we propose a rule that deals with this scenario by accepting
a preference order over set sizes as parameter and selecting a winner set with the most preferred size that does not require tiebreaking.

\begin{Rule}[Size Priority]
Let $\vartriangleright$ be a strict total order on $\{0, \dots, m\}$, the \emph{priority order}.
Then $\cR(E)=\{c_i\in C \mid 1\leq i \leq k\}$ if and only if
\begin{itemize}
\item either $\score_E(c_{k}) \neq \score_E(c_{k +1})$ or $k = 0$ or $k = m$,%
\item and $\score_E(c_{\ell}) = \score_E(c_{\ell+1})$ for all $\ell \vartriangleright k$.
\end{itemize}
\end{Rule}

Consider for example a strict total order of the form
$1 \vartriangleright 6 \vartriangleright 0 \vartriangleright \dots$.
Then the set of \SPAV{} winners under $\vartriangleright$ in Example~\ref{Exp1} is the empty set,
because $\{c_1\}$ and $\{c_1, \dots, c_6\}$ break ties, as $\score_E(c_1) = \score_E(c_2)$ 
and $\score_E(c_6) = \score_E(c_7)$.

\SPAV{} is a non-tiebreaking analogue of \kAV{}, which selects the $k$ alternatives with the highest approval score.
A specific instance of \SPAV{} was used by the Hugo Award prior to 2017
with the priority order $5 \vartriangleright 6 \vartriangleright 7 \dots$~\citep{Hugo}.
Generally, the choice of a priority order depends on the situation at hand. For award-shortlisting, typically a small number of alternatives is selected (the Booker Prize, e.g., has a shortlist of size 6).
In a much more principled fashion, \citet{amegashie1999design} argues that the optimal size of the winner set for shortlisting should be proportional to $\sqrt{m}$, i.e., the square root of the number of alternatives.

In practice, the most common priority order is $k \vartriangleright k+1 \vartriangleright 
\dots \vartriangleright m$ for some $k < m$, i.e., the smallest non-tiebreaking shortlist that contains at least $k$ 
alternatives is selected.
Another important special case are instances of \SPAV{} that rank $0$ and $m$ the lowest,
i.e., that are decisive whenever possible.
Therefore, we give \SPAV{}
rules with based on such priority orders a special name.

\begin{Def}
Let $\vartriangleright$ be a strict total order on $0, \dots, m$
and let $k$ be a positive integer with $k \leq m$
such that $k \vartriangleright k+1 \vartriangleright 
\dots \vartriangleright m$ and $m \vartriangleright \ell$ for all $\ell < k$.
Then, the \SPAV{} rule defined by the priority order $\vartriangleright$ is an \ISPAV{} rule.
We will write \ispav{k} as a short form for the \ISPAV{} rule with $k \vartriangleright k+1 \vartriangleright \dots$ as priority order.

Let $\vartriangleright$ be a strict total order on $0, \dots, m$
such that $k \vartriangleright m$ and $k \vartriangleright 0$ holds for all $0 < k < m$.
Then, the \SPAV{} rule defined by the priority order $\vartriangleright$ is a \DSPAV{} rule.
\end{Def}
 
Other special cases of \SPAV{} could be defined in a similar way, for example
\textit{Decreasing Size Priority}. However, \ISPAV{} and \DSPAV{} are the most natural and common
types of \SPAV{} and additionally satisfies better axiomatic properties than, e.g.,
\textit{Decreasing Size Priority}.

Finally, we propose a rule that combines the ideas behind \Fgap{} and \SPAV.
In practice, we often want to have a large gap between winners and non-winners, but not 
at any price in terms of the size of the shortlist.

\begin{Rule}[Top-$s$-First-$k$-gap]
Let $W'$ be the winner set for \Fgap{} and $W''$
the winner set for the \ISPAV{} instance defined by the order
$s \vartriangleright s+1 \vartriangleright \dots$ (\ispav{s}). If $\lvert W'\rvert \leq s$,
return $W'$. Otherwise return $W''$.
\end{Rule}

Consider, for example, \TopFgap{3}{2}. Then, in Example~\ref{Exp1} we know that
$W' = \{c_1, \dots, c_4\}$ is the \textit{First-2-gap} winner set. On the other hand,
the shortlist $W'' = \{c_1,c_2,c_3\}$ is non-tiebreaking and therefore the \SPAV{}
winner set for $3 \vartriangleright 4 \vartriangleright \dots$. As $\lvert W'\rvert > s$, the winner set under $\TopFgap{3}{2}$ is $W''$.

Let us now consider the relationships
between the proposed rules. 

\begin{Prop}
We observe the following relations between the considered voting rules:
\begin{itemize}
\item \Fgap{} and \Next{} are equivalent to \Approval{} for $k =1$.
\item \ispav{1} is equivalent to \Approval.
\item \TopFgapSK{} is equivalent to \Fgap{} for $s = m$ and it is equivalent to \ISPAV{} for 
$k = m$. 
\end{itemize}
\end{Prop}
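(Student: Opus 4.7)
The plan is to verify each of the five equivalences by unfolding the definitions, relying on two standing conventions: the candidate enumeration is in non-increasing order of approval score, and elections are non-degenerate (so the sorted score sequence exhibits at least one strict decrease).

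First I would handle the two rules that reduce to \Approval{} via a score-ordering argument. For \Fgap{} with $k=1$, the smallest index $i$ with $\score_E(c_i)-\score_E(c_{i+1}) \geq 1$ is by definition the position of the first strict drop in the sorted sequence; since $c_1,\dots,c_t$ are exactly the candidates with maximal score (where $t = |\{c : \score_E(c)=\max \score(E)\}|$), this first drop occurs at $i=t$. The winner set is $\{c : \score_E(c)\geq \score_E(c_t)\} = \{c : \score_E(c) = \max \score(E)\}$, which is the \Approval{} winner set. For \Next{} with $k=1$, candidate $c_i$ is a winner iff $\score_E(c_{i'}) \leq \score_E(c_{i'+1})$ for every $i' < i$; combined with $\score_E(c_{i'}) \geq \score_E(c_{i'+1})$ from the sorting, this forces $\score_E(c_1)=\dots=\score_E(c_i)$, so $c_i$ is a winner iff its score equals the maximum.

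Next I would treat \ispav{1}. Its priority order is $1 \vartriangleright 2 \vartriangleright \dots \vartriangleright m$, so the rule returns $\{c_1,\dots,c_k\}$ for the smallest $k$ with $\score_E(c_k) \neq \score_E(c_{k+1})$; such a $k$ exists (and satisfies $k<m$) by non-degeneracy. The defining condition that every smaller cut $\ell < k$ satisfies $\score_E(c_\ell)=\score_E(c_{\ell+1})$ forces $\score_E(c_1)=\dots=\score_E(c_k)=\max \score(E)$, so this set coincides with the \Approval{} winner set.

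Finally, for \TopFgapSK{}, the case $s=m$ is immediate: any subset of $C$ has size at most $m=s$, so $|W'|\leq s$ always and the rule returns $W'$, the \Fgap{} winner set. The case $k=m$ is the most delicate part of the proposition. The key observation is that approval scores lie in $\{0,\dots,n\}$, so the maximum possible gap is $n$; hence once $k$ exceeds $n$, no index $i$ satisfies $\score_E(c_i)-\score_E(c_{i+1})\geq k$ and \Fgap{} returns $C$ by definition. Under this regime we have $|W'|=m>s$ (assuming $s<m$), so the rule returns $W''=$ \ispav{s}, i.e., \ISPAV. The main obstacle I anticipate is making this last step precise: the claim ``$k=m$'' is only literally equivalent to \ISPAV{} when $m$ is large enough to exceed every realizable gap (for example when $m>n$, which is natural in shortlisting). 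I would therefore state the equivalence under the mild assumption $m\geq n+1$ (or equivalently under the interpretation that $k=m$ is chosen beyond the maximum possible gap), which is the only reading that makes the equivalence hold in general.
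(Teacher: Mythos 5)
Your proof is correct and follows essentially the same definition-unfolding route as the paper for the first four equivalences (\Fgap{}, \Next{}, and \ispav{1} versus \Approval{}, and the $s=m$ case of \TopFgapSK{}). On the final item ($k=m$), however, you are actually more careful than the paper: the paper's own proof simply asserts that $k=m$ forces $W'=C$, which fails whenever $m\leq n$. For instance, with $m=2$, $n=10$, scores $(10,0)$ and $s=2$, \Fgap{} with $k=2$ returns $\{c_1\}$, so \TopFgap{2}{2} returns $\{c_1\}$ while \ispav{2} returns $\{c_1,c_2\}$. Your observation that the equivalence only holds once $m$ exceeds the largest realizable gap --- i.e.\ under $m>n$, or under the reading of ``$k=m$'' as ``$k$ beyond any attainable gap'' --- identifies exactly the hypothesis the paper leaves implicit, and adding it is the right fix. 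There are no gaps in your argument.
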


\begin{proof}
First observe that \textit{First-$1$-Gap} and \textit{Next-$1$}
select all candidates which have maximal score.
Now let $c_i$ be the first candidate which has less than the maximal score. Then $\score_E(c_{i-1}) - \score_E(c_i) \geq 1$ and  thus \textit{First-$1$-Gap} selects $\{c_1,\dots,c_{i-1}\}$ (as does \Approval).
Further, $\score_E(c_{i-1} > \score_E(c_i) = \sum_{j=1}^1 \score_E(c_{i-1+j})$ and thus \textit{Next-$1$} selects $\{c_1,\dots,c_{i-1}\}$.
The argument for \ispav{1} is similar.

Finally, consider \TopFgapSK. If $s = m$, then $W'' = C$. Consequently, $\lvert W'\rvert\leq \lvert W''\rvert$ and $W'$ is thus the winner set.
On the other hand, if $k = m$ then $W' = C$. Hence, $\lvert W''\rvert\leq \lvert W'\rvert$ and $W''$ is thus the winner set.
\end{proof}

We finally observe that \qNCSA{} for $q=1$ is a mix of \Approval{} and \Threshold{}
and for $q= 0$ is closely related to \Threshold{} for $f(n) = \frac{1}{2}n$.
First consider $q= 1$.
If any candidate is approved by more than $50\%$ of the voters then \textit{1-NCSA}
is equivalent to \Approval, as the \textit{1-NCSA}-score equals the average net-approval of the
candidates in the set. This score is maximized by any set only containing candidates
with maximal approval. On the other hand, if no candidate has more than $50\%$ approvals
then no set has positive \qNCSA-score. Therefore, the empty set is the smallest set with
maximal \qNCSA-score.

Now consider $q=0$. We observe that then \qNCSA-score of a set $S \subseteq C$
is the sum of the \emph{net-approval} of the candidates, where the net approval
of a candidate $c$ is $\score_E(c) - (n -\score_E(c))$. Hence the \textit{0-NCSA}-score
is maximized by every set that contains all candidates with
positive net approval and an arbitrary number of candidates with $0$ net approval. 
A candidate has non-negative net approval if and only if $2\score_E(c) -n \geq 0$
which is equivalent to $\score_E(c) \geq \frac{n}{2}$.\footnote{This is not equivalent
to \Threshold{} for $f(n) = \frac{1}{2}n$, as candidates with exactly $\frac{1}{2}n$
approval score are not included in the winner set of \Threshold.}

To conclude the section, let us remark that all of the above rules can be computed in polynomial time. This follows immediately from their respective definitions. For \qNCSA{}, we made the argument explicit in Proposition~\ref{prop:qNCSA}.

\section{Axiomatic Analysis}\label{sec:axiomatic}

In this section, we axiomatically analyze shortlisting rules with the goal to discern their defining properties.
First, we consider axioms that are motivated by 
the specific requirements of shortlisting,  
then we study well-known axioms that describe
more generally desirable properties of voting rules.
For an overview, see Table~\ref{tab:Results}.

\begin{table*}
\centering
\small
    \begin{tabular}{l*{9}c}
         & \rot{Unanimity} & \rot{Anti-Unanimity} & 
         \rot{$\ell$-Stability} &
         \rot{Determined} &
        \rot{Independence} &
         \rot{Ind. of Losing Alt.} &
        \rot{Res.\ to Clones} &
        \rot{Set Monot.} &
        \rot{Superset Monot.} %
\\
        \midrule
         Approval Voting       & \OK     & \OK  & \NO          &\OK    & \NO & \OK & \OK  & \OK & \OK\\ %
         $f$-Threshold         & \OK     & \OK  & \NO          &\NO    & \OK & \OK & \OK  & \OK & \NO\\ %
       Max-Score-$f$-Threshold & \OK     & \OK  & \NO          &\OK    & \NO & \OK & \OK  & \OK & \NO\\ %
         First Majority        & \OK     & \OK  & \NO          &\OK    & \NO & \NO & \NO  & \NO & \NO\\ %
         $q$-NCSA              & \OK     & \OK  & \NO          &\NO    & \NO & \OK & \NO  & \OK & \NO  \\ %
         Next-$k$              & \OK     & \OK  & \NO          &\OK    & \NO & \NO & \NO  & \OK & \NO\\ %
         Largest Gap           & \OK     & \OK  & \NO          &\OK    & \NO & \NO & \OK  & \OK & \NO\\ %
         First $k$-Gap         & \OK     & \NO  &$\ell\leq k$  &\OK    & \NO & \OK & \OK  & \OK & \OK\\ %
         Decis. Size Priority  & \OK     & \OK  & \NO          &\OK    & \NO & \NO & \NO  & \OK & \NO\\ %
        Incr. Size Priority    & \OK     & \NO  & \NO          &\OK    & \NO & \OK & \NO  & \OK & \OK\\ %
        Top-$s$-First-$k$-Gap  & \OK     & \NO  & \NO          &\OK    & \NO & \OK & \NO  & \OK & \OK\\
        \midrule
    \end{tabular}
    \caption{Results of the axiomatic analysis.}
    \label{tab:Results}
\end{table*}

\subsection{$\ell$-Stability, Unanimity, and Anti-Unanimity}

When shortlisting is used for the initial screening of candidates,
for example for an award or a job interview, then we cannot assume that the voters 
have perfect judgment. Otherwise, there would be no need for a second round of deliberation,
as we could just choose the highest-scoring alternative as a winner.
Therefore, small differences in approval may not correctly reflect
which alternative is more deserving of a spot on the shortlist.
Thus, out of fairness, we want our voting rule to treat alternatives 
differently only if there is a significant difference in approval between them. 

\begin{Axm}[$\ell$-Stability]
If the approval scores of two alternatives differ by less than $\ell$,
either both or neither
should be a winner, i.e., 
for every election $E=(C,V)$ and candidates $c_i$ and $c_j$
if $\lvert \score_E(c_i) - \score_E(c_j)\rvert < \ell$
then either $c_i,c_j \in \cR(E)$ or $c_i,c_j \not \in \cR(E)$.
\end{Axm}

Here, the parameter $\ell$ has to capture what constitutes a significant difference in 
a given election. This will depend, for example, on the number and trustworthiness of the voters.
Also, observe that $1$-Stability equals Non-tiebreaking.

Now, while a small difference in approvals might not correctly reflect the relative quality 
of the candidates, we generally assume in shortlisting that the approval scores
approximate the underlying quality of alternatives\footnote{The relation between
approval voting and maximum likelihood estimation is analyzed in
detail by Procaccia and Shah~\citep{procaccia2015approval},
in particular, under which conditions approval voting selects the most likely
``best'' alternatives.}. Therefore, at a minimum, we want to include alternatives that are approved
by everyone and exclude alternatives that are approved by no one.

\begin{Axm}[Unanimity] If an alternative is approved by everyone, it must be a winner,
 i.e., 
for every election $E=(C,V)$, if
$\score_E(c) = n$ then $c \in \cR(E)$.
\end{Axm}

\begin{Axm}[Anti-Unanimity] If an alternative is approved by no one, it cannot win,
 i.e.,
for every election $E=(C,V)$ if  
$\score_E(c) = 0$ then $c \not\in \cR(E)$.
\end{Axm}

Unfortunately, it turns out that these three axioms are incompatible
unless there are many more voters than alternatives.
Indeed Unanimity, Anti-Unanimity and $\ell$-Stability can be jointly satisfied
if and only if $\lvert V\rvert \geq l\cdot\lvert C\rvert +1$. 

\begin{Thm}\label{Prop:Size-Stable}
For every $\ell$ there is a shortlisting rule that satisfies Unanimity, Anti-Unanimity and $\ell$-Stability
for every election $E$ such that $n_E > (\ell - 1)\cdot(m_E -1)$.
This is a tight bound in the following sense:
For every $\ell>1$, there is an election $E$ such that $n_E = (\ell - 1)\cdot(m_E -1)$
and no shortlisting rule can satisfy Unanimity, Anti-Unanimity and $\ell$-Stability
on $E$. 
\end{Thm}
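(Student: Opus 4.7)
I would prove the two parts of the theorem separately, with a single pigeonhole observation as the common thread: if every consecutive gap in the score-sorted sequence $s_1 \geq s_2 \geq \dots \geq s_m$ is strictly less than $\ell$, then the total spread satisfies $s_1 - s_m \leq (m-1)(\ell-1)$. Consequently, under the hypothesis $n > (\ell-1)(m-1)$, any election containing both a score-$n$ and a score-$0$ candidate must exhibit at least one gap of size $\geq \ell$; contrapositively, when $n = (\ell-1)(m-1)$, one can distribute $m$ scores across $\{0,1,\dots,n\}$ with \emph{all} consecutive gaps equal to $\ell - 1$, forcing every shortlisting rule into a contradiction.

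For the existence direction, I would introduce the following auxiliary rule. Given an election $E$, cut the score-sorted sequence $c_1, \dots, c_m$ at every index $i$ with $\score_E(c_i) - \score_E(c_{i+1}) \geq \ell$, obtaining maximal $\ell$-clusters $G_1, \dots, G_k$, with $G_1$ containing the top scores. The rule outputs $G_1 \cup \dots \cup G_j$, where $j$ is the largest index such that $G_j$ contains no score-$0$ candidate (and $\emptyset$ if no such $j$ exists). Checking that this is a shortlisting rule is routine: Anonymity and Neutrality follow from the purely score-based definition; Efficiency holds because the output is always a prefix of the score-sorted sequence; Non-tiebreaking holds because tied candidates necessarily sit in the same cluster. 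The axiom $\ell$-Stability follows because two candidates whose scores differ by less than $\ell$ cannot be separated by a gap of size $\geq \ell$, hence live in the same cluster. Anti-Unanimity holds by the very definition of the cutoff $j$. The only non-trivial verification, and the main obstacle of the proof, is Unanimity: I must rule out that the top cluster $G_1$ simultaneously contains a score-$n$ and a score-$0$ candidate. If it did, $G_1$ would be the only cluster and hence contain all $m$ candidates with consecutive gaps below $\ell$, so the pigeonhole bound above would yield $n \leq (m-1)(\ell-1)$, contradicting the hypothesis.

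For the tightness direction, given $\ell > 1$, I would construct an election on $m$ candidates with $n = (\ell-1)(m-1)$ voters realizing the score vector $s_i = (m-i)(\ell-1)$. A concrete realization is to include $\ell - 1$ copies of each prefix ballot $\{c_1\}, \{c_1,c_2\}, \dots, \{c_1, \dots, c_{m-1}\}$; a direct count confirms both the total number of voters and the desired scores. Non-degeneracy is automatic since $\ell > 1$ makes the scores strictly decreasing. Every consecutive pair of candidates then differs in score by exactly $\ell - 1 < \ell$, so iterating $\ell$-Stability forces every shortlisting rule to classify all $m$ candidates identically. Yet $c_1$ has score $n$, forcing $c_1 \in \cR(E)$ by Unanimity, while $c_m$ has score $0$, forcing $c_m \notin \cR(E)$ by Anti-Unanimity---the desired contradiction, which rules out any shortlisting rule satisfying all three axioms on $E$.
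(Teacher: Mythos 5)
Your proof is correct and follows essentially the same strategy as the paper's: the same pigeonhole bound on consecutive gaps, a modified gap-based rule for the existence direction (the paper's \emph{Modified First $\ell$-Gap} keeps only the candidates above the \emph{first} $\ell$-gap, whereas your variant discards only the bottom cluster when it contains a score-$0$ candidate---both work for the same reasons), and a tightness instance in which every consecutive gap equals $\ell-1$ (the paper uses the $m=2$ member of your family). No gaps to report.
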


\begin{proof}
To show that Unanimity, Anti-Unanimity and $\ell$-Stability
are jointly satisfiable if $n_E > (\ell - 1)\cdot(m_E -1)$,
we will show that a slightly modified version of \Fgap{} satisfies 
all three axioms for elections $E$ with $n_E > (\ell -1) \cdot(m_E - 1)$.
We define \MFgap{} as follows:
Let $c_1, \dots, c_m$ be an enumeration of $C$
such that $\score_E(c_{i-1}) \geq \score_E(c_{i})$.
Let $i$ be the smallest index such that 
$\score_E(c_i) - \score_E(c_{i+1}) \geq \ell$.
Then $c \in \cR(E)$ if and only if $\score_E(c)\geq \score_E(c_i)$.
If no such index exists, then 
$\cR(E) = \emptyset$ if there is an alternative $c$ with $\score_E(c) = 0$, and 
$\cR(E) = C$ otherwise.
Clearly, this rule still satisfies $\ell$-Stability.

Now, let $E$ be an election such that there is an alternative $c$ 
with $\score_E(c) = n$. 
Assume first that there is no alternative $c'$ with $\score_E(c') = 0$.
In that case, \MFgap{} vacuously satisfies Anti-Unanimity and, by definition, %
also Unanimity.
Now assume that there is an alternative $c$ with $\score_E(c) = 0$.
We claim that there is an index $i$ such that 
$\score_E(c_i) - \score_E(c_{i+1}) \geq \ell$
and hence only alternatives $c$ such that $\score_E(c) \geq \score_E(c_i)> \ell -1$
are winners.
Otherwise, we have 
$\score_E(c_{i+1}) \geq \score_E(c_{i}) - (\ell -1) $
for all $i < m$ and hence  
$\score_E(c_m) \geq \score_E(c_{1}) -(\ell -1)\cdot (m-1)$.
However, as $\score_E(c_1) = n > (\ell -1)\cdot(m-1)$ this contradicts 
the assumption that there is an alternative $c$ with $\score_E(c) = 0$,
i.e., $\score_E(c_m)=0$.

Finally, let $E$ be an election such that there is no alternative $c$ with $\score_E(c) = n$.
Then, \MFgap{} vacuously satisfies Unanimity. 
Now, if there is an alternative $c'$ with $\score_E(c') = 0$ then 
we have to distinguish two cases. If there is no $\ell$-gap, then
$\cR(E) = \emptyset$ by definition and hence \MFgap{} satisfies Anti-Unanimity.
On the other hand, if there is a $\ell$-gap, then only alternatives above the $\ell$-gap 
are selected, which must have a score of $\ell$ or larger. Hence,
Anti-Unanimity is also satisfied.

Now we show the tightness of the theorem.
Let $E$ be an election with 2 alternatives and $\ell -1$ voters such that
$\score(E) = (\ell -1,0)$. We observe $n_E = \ell -1 = (\ell-1)\cdot(2-1)$.
We claim that no $\cR$ satisfy Unanimity, Anti-Unanimity and 
$\ell$-Stability on $E$.
Hence, $c_1 \in \cR(E)$ must hold by Unanimity.
Then $\score_E(c_1) - \score_E(c_2) < \ell$ implies
$c_2 \in \cR(E)$ by $\ell$-Stability, contradicting Anti-Unanimity.
\end{proof}

Theorem~\ref{Prop:Size-Stable} tells us that $\ell$-Stability requires some sacrifices as 
it is incompatible with the combination of Unanimity and Anti-Unanimity.
However, \Fgap{} can be seen as an optimal compromise 
as, with a small modification, it satisfies
Anti-Unanimity whenever Theorem~\ref{Prop:Size-Stable} allows it.

Let us now analyze the considered shortlisting rules with regard to the three axioms  
Unanimity, Anti-Unanimity and $\ell$-Stability.

\begin{itemize}
\item  It is straightforward to see that \Approval, \Threshold, \MSThreshold{} and \Lgap{} 
satisfy Unanimity and Anti-Unanimity for all non-degenerate profiles.
Hence, they cannot satisfy $\ell$-Stability for $\ell > 1$.

\item By definition, \Fgap{} satisfies Unanimity and $\ell$-Stability for
$k\geq \ell$ for all elections. Therefore, it cannot satisfy Anti-Unanimity.

\item \First{} satisfies Unanimity as $c_1 \in \cR(E)$ by definition and $\score_E(c_i) =
\max_{c\in C}(\score_E(c)) = \score_E(c_1)$ implies $c_i \in \cR(E)$.
Furthermore, we claim that it satisfies Anti-Unanimity. Let $c_1, \dots, c_m$ be the enumeration
of the candidates used by \First{} and let $c_i$ be the first candidate with $\score_E(c_i) =0$.
Then, $\sum_{j >(i-1)} \score_E(c_{j}) > 0$ while $\sum_{j >(i-1)} \score_E(c_{j}) = 0$.
This implies that $c_i \not \in \cR(E)$.  
It follows that \First{} does not satisfy $\ell$-Stability for $\ell>1$.

\item \Next{} satisfies Unanimity by definition. Furthermore, we claim that
it satisfies Anti-Unanimity. Let $c_1, \dots, c_m$ be the enumeration
of the candidates used by \Next{} and let $c_i$ be the first candidate with $\score_E(c_i) =0$.
Then, $\score_E(c_{i-1}) > 0$ while $\sum_{j = 1}^{k} \score_E(c_{(i-1)+j}) = 0$.
This implies that $c_i \not \in \cR(E)$.  
As before, it follows that \Next{} does not satisfy $\ell$-Stability for $\ell>1$.

\item \qNCSA{} satisfies Unanimity and Anti-Unanimity. 
First, we show that \qNCSA{} satisfies Unanimity: Let $c_i$ be a candidate
with $\score_E(c_i) = n$. Then, $\score_E^{\qNCSA}(\{c_i\}) > 0 = \score_E^{\qNCSA}(\emptyset)$ 
and therefore $\cR(E) \neq \emptyset$. As $\score_E(c_i) = \max_{c\in C}(\score_E(c))$
this implies by Efficiency and Non-tiebreaking that $c_i \in \cR(E)$.
Now, we show that \qNCSA{} satisfies Anti-Unanimity: Let $c_i$ be a candidate
with $\score_E(c_i) = 0$. Then $2\score_E(c_i) -n < 0$, which means for every 
set~$S$ such that $c_i \in S$ that the \qNCSA-score of $S$ is strictly
smaller than the \qNCSA-score of $S \setminus\{c_i\}$. This means that 
the \qNCSA-score of $S$ is not maximal. As $S$ was chosen arbitrarily, we can conclude
$c_i \not \in \cR(E)$.
As above, it follows that \Next{} does not satisfy $\ell$-Stability for $\ell>1$.

\item Now, we claim that \SPAV{} always satisfies either Unanimity or Anti-Unanimity:
First, we show that it satisfies Unanimity if $m \vartriangleright 0$:
As selecting all $m$ candidates can never be tie-breaking, \SPAV{}
will never select the empty set in this case. This implies
$c_1 \in \cR(E)$ for all $E$. As \SPAV{} is non-tiebreaking it follows
that all $c_i$ with $\score_E(c_i)= n$ must be in the winning shortlist.
On the other hand, \SPAV{} satisfies Anti-Unanimity if $0 \vartriangleright m$ holds
by a symmetric argument. 

Moreover, we claim that it satisfies both axioms
(for non-degenerate profiles) if and only if it is a \DSPAV{} rule.
First let $\cR$ be a \DSPAV{} rule.
Then, for every non-degenerate profile there must be a $i$
such that $c_1, \dots, c_i$ can be selected as winners without tiebreaking.
As $\cR$ is a \DSPAV{} rule, we know $i \vartriangleright m$ and $i \vartriangleright 0$.
It follows that $\cR(E)$ is neither $C$ nor $\emptyset$. By a similar argument as before,
this implies that $\cR$ satisfies Unanimity and Anti-Unanimity.

Now assume that $\cR$ is a \SPAV{} instance that is not determined. 
Then there exists a $0 <k <m$ such that either $m \vartriangleright k$ or
$0 \vartriangleright k$. Consider an election $E$
such that $\score_E(c_i) = n$ for all $i \leq k$ and $\score_E(c_j) =0$
if $j > k$. Then, in the first case all candidates are winners and hence Anti-Unanimity
is violated and in the second case no candidate is a winner and hence Unanimity is 
violated.

It follows from the above that \ISPAV{} satisfies 
Unanimity but not Anti-Unanimity.
Finally, \SPAV, by definition, satisfies $\ell$-Stability for $\ell > 1$ if and only if
$0$ or $m$ is the most preferred size.
\item Finally, \TopFgapSK{} satisfies Unanimity because \Fgap{} and \ISPAV{} do so. 
That means for all election both winner sets $W'$ and $W''$ considered by \TopFgapSK{}
satisfy unanimity. Hence, whatever set is chosen, unanimity is satisfied.
On the other hand, it satisfies neither $\ell$-Stability for $\ell > 1$ nor Anti-Unanimity.
Consider first the election given by $\score(E) = (3,2,0,0)$ and \TopFgap{1}{2}.
Then $\{c_1\}$ is the winner set and hence $2$-Stability is violated.
Now consider the same election under \TopFgap{3}{3}. Then both $W'$ and $W''$ equal
$\{c_1, \dots,c_4\}$ and hence Anti-Unanimity is violated.

\end{itemize}

We observe that \Fgap{} is the
only voting rule considered in this paper that satisfies $\ell$-Stability for $\ell>1$:
However, it is worth noting that \Lgap{} satisfies $\ell$-Stability whenever 
there is an $\ell$-gap.

\subsection{Minimal Voting Rules}

The goal of shortlisting is to reduce a set of alternatives 
to a more manageable set of alternatives. 
It is therefore desirable that shortlisting rules produce short shortlists,
without compromising on quality. To formalize this desideratum
we define the concept of a minimal voting rule that satisfies a set of axioms.

\begin{Def}
Let $\mathcal{A}$ be a set of axioms and let $S(\mathcal{A})$
be the set of all voting rules satisfying
all axioms in $\mathcal{A}$.
Then, we say a voting rule is a minimal voting rule $\cR$
for $\mathcal{A}$
if for all elections $E$ it holds that
$\cR(E) = \bigcap_{\cR^* \in S(\mathcal{A})}  \cR^*(E)$.
\end{Def}

We observe that in general a minimal voting rule $\cR$
for a set of axioms $\mathcal{A}$ does not satisfy
all axioms in $\mathcal{A}$.
Consider, e.g., the following axiom:

\begin{Axm}[Determined]
Every election must have at least one winner, i.e.,
for all elections $E$ we have $\cR(E) \neq \emptyset$.
\end{Axm}

First, observe that
besides \Threshold{}, \qNCSA{} and \SPAV{} all voting rules
considered in this paper are determined by definition.
For \Threshold{} it is clear that $\cR(E)$ can be empty if 
no candidate achieves enough approvals to clear the threshold.
Observe that this is not the case for \MSThreshold,
as we assume $f(n) < n$ and hence candidates with maximal
score are always winners.
\SPAV{} returns the empty set if $0$ is the most preferred
set size that does not require tiebreaking.
This cannot happen if \SPAV{} is either a \DSPAV{} rule
or $m \vartriangleright 0$; \SPAV{} is determined in these cases. In particular
this means that \ISPAV{} is determined. 
For \qNCSA{} we observe that if no candidate has at least $50\%$
approvals then $2\score_E(c) - n$ is negative for all candidates 
and hence the \qNCSA-score is only maximized by the empty set. Hence
\qNCSA{} is not determined.

Now, let us consider arbitrary voting rules with a variable number of winners, i.e.,
not only shortlisting rules. Then for every $c \in C$ the rule $\cR_c$
that always outputs the set~$\{c\}$ is a determined voting rule.
It follows that the minimal determined voting rule always outputs the empty set
and is hence not determined.
In contrast, for shortlisting rules the following holds.

\begin{Prop}
Let $\mathcal{A}$ be a set of axioms that contains the four basic shortlisting axioms
(Axioms 1--4). Then the minimal voting rule for $\mathcal{A}$ is again a shortlisting rule, i.e., it satisfies Axioms 1--4.
\end{Prop}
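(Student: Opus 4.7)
The plan is to verify each of the four basic shortlisting axioms directly for the intersection rule $\cR(E) = \bigcap_{\cR^* \in S(\mathcal{A})} \cR^*(E)$. The key underlying observation is that each of Anonymity, Neutrality, Efficiency, and Non-tiebreaking is a universally quantified conditional statement about membership in $\cR^*(E)$, and such statements are automatically preserved under pointwise intersection of rules. In particular, $\mathcal{A}$ may contain additional axioms besides the basic four, but we only need to use that every $\cR^* \in S(\mathcal{A})$ satisfies the four basic axioms.

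For Anonymity, given any voter-permutation $\pi$, every $\cR^* \in S(\mathcal{A})$ satisfies $\cR^*(E) = \cR^*(E^*)$; intersecting over all such rules yields $\cR(E) = \cR(E^*)$. Neutrality is analogous: since $\pi(c) \in \cR^*(E^*)$ iff $c \in \cR^*(E)$ for each $\cR^*$, the equivalence passes to the intersection. For Efficiency, suppose $c_j \in \cR(E)$ and $\score_E(c_i) > \score_E(c_j)$. Then $c_j \in \cR^*(E)$ for every $\cR^* \in S(\mathcal{A})$, so Efficiency of each $\cR^*$ yields $c_i \in \cR^*(E)$ for every such rule, and therefore $c_i \in \cR(E)$. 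Finally, for Non-tiebreaking, if $\score_E(c_i) = \score_E(c_j)$ and $c_i \in \cR(E)$, then $c_i \in \cR^*(E)$ for every $\cR^*$; Non-tiebreaking of each $\cR^*$ forces $c_j \in \cR^*(E)$ for every such rule, so $c_j \in \cR(E)$. The converse direction is symmetric.

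There is essentially no obstacle to this proof; the heart of the matter is the syntactic form of the four basic axioms, which are all $\forall$-statements of the shape \emph{``if [condition on scores or permutations] then $c \in \cR(E)$''} (or a conjunction of two such implications). The only point worth making explicit is that the statement tacitly assumes $S(\mathcal{A})$ is non-empty so that the intersection is well-defined; this is a mild assumption, since the four basic axioms on their own are jointly satisfiable, as witnessed by concrete rules such as \Approval{} introduced in Section~\ref{sec:rules}.
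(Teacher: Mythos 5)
Your proof is correct. For Anonymity and Neutrality it coincides with the paper's (which simply calls these ``straightforward''), but for Efficiency and Non-tiebreaking you take a genuinely different route: you exploit the logical form of the axioms --- each is a universally quantified implication (or biconditional) whose conclusion asserts membership in the winner set --- and such statements are automatically preserved under pointwise intersection of rules. The paper instead argues structurally: since every $\cR^* \in S(\mathcal{A})$ is a shortlisting rule, its output on a fixed election $E$ is a score-ordered prefix $\{c_1,\dots,c_{k_{\cR^*}}\}$; these prefixes are nested, so the intersection equals the shortest of them and is therefore itself the output of some rule in $S(\mathcal{A})$ on $E$, whence it inherits Efficiency and Non-tiebreaking there. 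Your argument is more elementary and more general (it needs no nestedness and would apply to any axioms of this syntactic shape), whereas the paper's argument additionally yields the slightly stronger observation that on each election the minimal rule agrees with some rule in $S(\mathcal{A})$. Your remark that $S(\mathcal{A})$ must be non-empty for the intersection to be meaningful is a fair point the paper glosses over; as you say, it is harmless for the four basic axioms (witnessed by \Approval{}), though for an arbitrary superset $\mathcal{A}$ non-emptiness would have to be checked separately.
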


\begin{proof}
Let $\mathcal{A}$ be a set of axioms and let $\cR$
be the minimal voting rule for $\mathcal{A}$.
It is straightforward to see that $\cR$ satisfies Neutrality and Anonymity.
We show that $\cR$ also satisfies Efficiency and is non-tiebreaking.
Let $E$ be an election. 
As every rule in $S(\mathcal{A})$ is a shortlisting rule,
there is a $k_{\cR^*} \in \{0,\dots, m\}$ for every rule $\cR^* \in S(\mathcal{A})$
such that $\cR^*(E) = \{c_1, \dots , c_{k_{\cR^*}}\}$.
Now let $k_m$ be the smallest $k$ 
such that there is a rule $\cR^* \in S(\mathcal{A})$
with $\cR^*(E) = \{c_1, \dots, c_{k}\}$. 
Then, by definition $\cR(E) = \cR^*(E)$.
As $\cR^*(E)$ does not violate Efficiency and non-tiebreaking for $E$, neither does $\cR$. 
As this argument holds for arbitrary elections, $\cR$ satisfies Efficiency and is non-tiebreaking.
\end{proof}

As the voting rule that always outputs the empty set is a shortlisting rule, it is also the 
minimal shortlisting rule (without additional axioms). Therefore, we need to assume
additional axioms. We consider determined and $\ell$-stable shortlisting rules.

\begin{Thm}
\Approval{} is the minimal voting rule that is
efficient, non-tiebreaking and determined.
Furthermore, for every positive integer $k$, \Fgap{} is the minimal voting rule that is
efficient, $k$-stable and determined.
\end{Thm}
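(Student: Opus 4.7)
My plan is to prove each of the two minimality claims by the same two-step recipe. Writing $\cR$ for the candidate rule (\Approval{} in Part~1, \Fgap{} in Part~2) and $\cR^*$ for an arbitrary rule in the prescribed class $S(\mathcal{A})$, I will show (i) $\cR \in S(\mathcal{A})$, and (ii) $\cR(E) \subseteq \cR^*(E)$ for every election $E$. Together these force the intersection that defines the minimal rule to equal $\cR$, since on the one hand $\cR$ is one of the rules being intersected (giving $\bigcap \cR^*(E) \subseteq \cR(E)$) and on the other hand $\cR(E)$ sits inside every $\cR^*(E)$ (giving the reverse inclusion).

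For Part~1, the verification that \Approval{} satisfies Efficiency, Non-tiebreaking, and Determined is immediate from its definition. For the inclusion, let $\cR^*$ be any such rule and $E$ any election. By Determined, $\cR^*(E) \neq \emptyset$, so there is some $c^* \in \cR^*(E)$. By Efficiency, every candidate with score at least $\score_E(c^*)$ must be in $\cR^*(E)$, so in particular the top-scoring candidate $c_1$ is in $\cR^*(E)$. By Non-tiebreaking, every candidate tied with $c_1$ in score is also in $\cR^*(E)$; but these candidates are exactly the \Approval{} winners, so $\cR(E) \subseteq \cR^*(E)$.

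For Part~2, \Fgap{} clearly satisfies Efficiency, Determined, and $k$-Stability (the last holding by construction: candidates selected together have consecutive score gaps strictly less than $k$), and note that $k$-Stability implies Non-tiebreaking. For the inclusion, again let $\cR^*$ be any efficient, $k$-stable, determined rule. As in Part~1, $c_1 \in \cR^*(E)$. Now let $i$ be the smallest index with $\score_E(c_i) - \score_E(c_{i+1}) \geq k$ (or $i = m$ if no such gap exists). By definition of $i$, for every $j < i$ we have $\score_E(c_j) - \score_E(c_{j+1}) < k$, so $k$-Stability applied iteratively along the chain $c_1, c_2, \dots, c_i$ forces each consecutive candidate into $\cR^*(E)$ once $c_1$ is there. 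Hence $\{c_1,\dots,c_i\} \subseteq \cR^*(E)$, which is precisely the \Fgap{} winner set.

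The only slightly nontrivial step is the inductive chain argument for $k$-Stability in Part~2, and even this reduces to a single observation: if $\score_E(c_j) - \score_E(c_{j+1}) < k$ then $|\score_E(c_j) - \score_E(c_{j+1})| < k$ (since the enumeration is non-increasing), so the hypothesis of $k$-Stability applies. The edge case where no $k$-gap exists is handled uniformly by letting the chain run all the way to $c_m$, which correctly recovers \Fgap{}'s convention of returning $C$ in that situation.
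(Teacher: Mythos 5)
Your proof is correct and follows essentially the same route as the paper's: verify the candidate rule satisfies the axioms (giving one inclusion) and then use Determined, Efficiency, and $k$-Stability in a chain argument to force the \Fgap{} winner set into every rule of the class (giving the other). The only cosmetic difference is that the paper obtains the \Approval{} case as the $k=1$ special case of the \Fgap{} argument, whereas you prove it directly first.
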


\begin{proof}
Let $\mathcal{A}$ be the set
$\{\mbox{Efficiency}, k\mbox{-Stability}, \mbox{Determined}\}$
and $\cR$ be \Fgap.
We know that \Fgap{} is efficient, $k$-stable and determined,
therefore we know 
$\bigcap_{\cR^* \in S(\mathcal{A})}  \cR^*(E) \subseteq \cR(E)$.

Now, every determined voting rule must have a non-empty set of winners.
If the voting rule is efficient, the set of winners must contain at least one top ranked 
alternative. 
Now, consider an enumeration of the alternatives
$c_1, \dots ,c_m$ such that $\score_E(c_j) \geq \score_E(c_{j+1})$
holds for all $j$. If a voting rule is $k$-stable, a winner set 
containing one top ranked alternative must contain all alternatives $c_i$ for which 
$\score_E(c_j) < \score_E(c_{j+1}) +k$ holds for all $j < i$. 
By the definition of \Fgap{} this implies
$\cR(E) \subseteq \bigcap_{\cR^* \in S(\mathcal{A})}  \cR^*(E)$.

The minimality of \Approval{} is a special case of the minimality of \Fgap,
as $1$-Stability equals Non-tiebreaking and \textit{First-1-Gap} is equivalent 
to \Approval.
\end{proof}

This result is another strong indication that \Fgap{} is promising from an axiomatic standpoint.
It produces shortlists that are as short as possible without 
violating $k$-Stability, an axiom that is desirable in many shortlisting 
scenarios.

Next we will consider axioms that are not specific to shortlisting, 
but often appear in the voting and judgment aggregation literature 
to characterize ``well behaved'' aggregation techniques.

\subsection{Independence}

$\ell$-Stability
formalizes the idea that the length of a shortlist should take 
the magnitude of difference between approval scores into account.
This contradicts an idea that is often considered in judgment 
aggregation, namely that all alternatives should be treated independently~\citep{Handbook-JA}.

\begin{Axm}[Independence]
If an alternative is approved by exactly the same voters in two elections
then it must be a winner either in both or in neither.
That is,
for an alternative $c$, and two elections $E = (C,V)$ and $E^* = (C,V^*)$ with $\lvert V\rvert = \lvert V^*\rvert$ and
 $c \in v_i$ if and only if $c \in v_i^*$ for all $i\leq n$, it holds that $c \in \cR(E)$ if and only if $c \in \cR(E^*)$.
\end{Axm}

\Threshold{} rules are the only rules in our paper satisfying Independence.
Indeed, Independence characterizes \Threshold{} rules.

\begin{Thm}\label{Class:Threshold-thm}
Given a fixed set of alternatives~$C$, every shortlisting rule that satisfies Independence
is an \Threshold{} rule for some function $f$.
\end{Thm}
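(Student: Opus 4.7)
The plan is to show that Independence forces the membership of a candidate $c$ in the winner set to be a function only of the total number of voters $n$ and of $\score_E(c)$, and that Efficiency then pins down that function to a threshold form.

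\textbf{Step 1 (the rule factors through $(n,\text{score})$).} Fix a candidate $c$ and take two elections $E_1=(C,V_1)$ and $E_2=(C,V_2)$ with $n:=\lvert V_1\rvert=\lvert V_2\rvert$ and $\score_{E_1}(c)=\score_{E_2}(c)=:s$. Using Anonymity I can reindex the voters in $V_2$ so that precisely the same voter indices approve $c$ in both profiles. Independence then forces $c\in\cR(E_1)\iff c\in\cR(E_2)$. Thus whether $c$ wins depends only on $n$, on $s$, and possibly on the identity of $c$.

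\textbf{Step 2 (remove the candidate's identity).} Let $c,c'$ be two candidates, $E,E'$ two elections with $n$ voters each, and suppose $\score_E(c)=\score_{E'}(c')=s$. Applying the transposition $\pi$ that swaps $c$ and $c'$ to every ballot of $E'$ yields an election $\pi(E')$ in which $c$ has score $s$; by Neutrality, $c\in\cR(\pi(E'))\iff c'\in\cR(E')$, and by Step 1, $c\in\cR(E)\iff c\in\cR(\pi(E'))$. Consequently there is a single predicate $g_n:\{0,1,\dots,n\}\to\{0,1\}$ such that for every election $E$ with $n$ voters and every candidate $c$, $c\in\cR(E)\iff g_n(\score_E(c))=1$.

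\textbf{Step 3 (monotone threshold).} For any pair $s<s'$ in $\{0,\dots,n\}$ I can construct a non-degenerate election on $C$ containing two candidates of scores exactly $s$ and $s'$ (since $|C|\ge 2$ in any interesting case, such profiles are easy to build). Efficiency rules out $g_n(s)=1$ together with $g_n(s')=0$, so $g_n$ is monotone non-decreasing. Let $t_n$ be the smallest $s$ with $g_n(s)=1$ (or $t_n:=n+1$ if $g_n\equiv 0$), and set $f(n):=t_n-1$. Then $c\in\cR(E)\iff\score_E(c)>f(n)$, which is exactly the $f$-Threshold rule.

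The one fiddly point, and the main place where care is needed, is the boundary: when $g_n\equiv 1$ (everyone wins) or $g_n\equiv 0$ (no one wins), the resulting threshold falls just outside the paper's stated range $0<f(n)<n$. This is a technical wrinkle rather than a real obstacle: the theorem should be read as allowing the natural extended thresholds $f(n)\in\{0,\dots,n\}$, with the extremes describing the degenerate constant rules that arise from trivial choices of $g_n$.
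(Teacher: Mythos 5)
Your proof is correct and follows essentially the same route as the paper's: Anonymity together with Independence makes a candidate's membership in the winner set depend only on its approval score and $n$, Neutrality makes that criterion uniform across candidates, and Efficiency forces the resulting predicate to be monotone, hence a threshold. Your closing remark about the degenerate cases (the all-win and none-win rules yielding thresholds outside the stated range $0<f(n)<n$) is a fair observation about the theorem's phrasing that the paper's own proof also passes over silently.
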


\begin{proof}
Let $\cR$ be a voting rule that satisfies Anonymity and Independence.
Then we claim that for two elections $E = (C,V)$ and $E^* = (C,V^*)$ with $\lvert V\rvert = \lvert V^*\rvert$
and an alternative $c_i \in C$ we have that $\score_E(c_i) = \score_{E^*}(c_i)$
implies that either $c_i \in \cR(E),\cR(E^*)$ or
$c_i \not \in \cR(E),\cR(E^*)$.
If $\score_E(c_i) = \score_{E^*}(c_i)$, then there is a 
permutation $\pi: \{1,\dots,n\} \to \{1,\dots,n\}$
such that $c_i \in v_i$ if and only if $c_i \in v^*_{\pi(i)}$.
Now, let $E' = (C, \pi(V))$. Then, by Anonymity, 
$c_i \in \cR(E)$ if and only if $c_i \in \cR(E')$. 
Now, as $c_i$ is approved by the same voters in
$E'$ and $E^*$, Independence implies $c_i \in \cR(E')$ if and only if
$c_i \in \cR(E^*)$.
  
Now, let $E = (C,V)$ and $E^* = (C,V^*)$ be two elections with $\lvert V\rvert = \lvert V^*\rvert$.
Furthermore, assume $c_i \in \cR(E)$ and $\score_E(c_i) < \score_{E^*}(c_i)$.
We claim that this implies $c_i \in \cR(E^*)$.
By Independence, we can assume w.l.o.g.\ that there is an alternative $c_j$ such 
that $\score_E(c_j) = \score_{E^*}(c_i)$.
Then, by Efficiency, $c_j \in \cR(E^*)$. Now, let $E'$
be the same election as $E$ but with $c_i$ and $c_j$ switched.
Then by Neutrality we have $c_i \in \cR(E')$.
As by definition $\score_{E'}(c_i) = \score_{E^*}(c_i)$
this implies $c_i \in \cR(E^*)$ by Anonymity and Independence.

The two arguments above mean that for every alternative $c_i$ and $n \in \mathbb{N}$ there is a $k$
such that for all elections $E = (C,V)$ with $\lvert V\rvert = n$
we know $c_i \in \cR(E)$ if and only if $\score_E(c_i) \geq k$.
If $\cR$ also satisfies Neutrality, then $k$ must be the same for every $c_i \in C$
and hence $\cR$ must be a Threshold rule.
\end{proof}

In light of Theorem~\ref{Class:Threshold-thm}, Independence seems to be a very strong requirement,
therefore we also consider the axiom Independence of Losing
Alternatives which can be seen as a weakening of Independence.
It states that removing a non-winning alternative cannot
change the outcome of an election.

\begin{Axm}(Independence of Losing Alternatives)
Let $E = (C,V)$ with $V=(v_1,\dots,v_n)$ and $E^* = (C^*, V^*)$
where $C^* = C \setminus \{c^*\}$ and
$V^*=(v_1^*,\dots,v_n^*)$ be two elections such that $c^* \not \in \cR(E)$
and $v_i^* = v_i \setminus \{c^*\}$ for all $i\leq n$. Then $\cR(E) = \cR(E^*)$.
\end{Axm}

Clearly, \Threshold{} satisfies this axiom as it also satisfies Independence.
As removing a losing alternative does not change the maximal score, the
same holds for \MSThreshold.
Furthermore, as the removal of a losing alternative can only widen the gap between
the winners and the non-winners, \Fgap{} satisfies Independence of Losing Alternatives, and
so does \Approval{}, which is a special case of \Fgap{}.
Finally, for \qNCSA{} the removal of a losing alternative just removes
some non-maximal sets from consideration. Clearly, this does not 
change which sets have maximal \qNCSA-score.

None of the other rules satisfy Independence of Losing Alternatives.
\begin{itemize}
\item \First{}: Assume
$E$ is an election such that $\score(E) = (3,2,1,0)$.
Then the winner set under \First{} is $\{c_1,c_2\}$ but removing 
$c_3$ changes the winner set to $\{c_1\}$.
\item \Lgap: Consider the same election as for \First. Then, the winner set under \Lgap{}
is $\{c_1\}$ but removing $c_3$ changes this to $\{c_1,c_2\}$.
\item \Next: Consider an election $E$ with $\score(E) = (4,3,2,0)$.
Then, for every $k > 1$, we have $\cR(E) = \{c_1,c_2\}$ under \Next{},
but after deleting $c_3$ we have $\cR(E) = \{c_1\}$.
\end{itemize}

For \SPAV{} we encounter a difficulty: Independence of Losing Alternatives cannot be applied to \SPAV{}
because each instance of \SPAV{} is defined by a linear order on $0, \dots, m$ and
decreasing the number of alternatives necessitates a different order.
We can deal with this problem by defining classes of \SPAV{} instances:

\begin{Def}
Let $\vartriangleright$ be a linear order on~$\mathbb{N}$.
Then the class of \SPAV{} instances defined by $\vartriangleright$ contains 
for every number of alternatives~$m$ the \SPAV{} instance
given by the restriction of $\vartriangleright$ to $\{0,1,\dots,m\}$.

We say that the class of \SPAV{} instances defined by $\vartriangleright$
is a class of \ISPAV{} instances if every \SPAV{} instance in the class is an \ISPAV{}
instance.
\end{Def}

This definition allows us to ask whether classes of \SPAV{} instances
(defined by $\vartriangleright$) satisfies Independence of Losing Alternatives.
Consider, e.g., the class of \SPAV{} instances
defined by any order of the form $2 \vartriangleright 1 \vartriangleright \dots$
and an election $E$ with $\score(E) = (2,1,1)$.
Then $\cR(E) = \{c_1\}$ but the removal of $c_3$ leads to $\cR(E) = \{c_1,c_2\}$.
Thus, \SPAV{} fails Independence of Losing Alternatives in general.
However, we claim that every class of \ISPAV{} instances satisfies
Independence of Losing Alternatives.
We distinguish two cases: First assume all $m$-alternatives are selected.
Then Independence of Losing Alternatives is vacuously satisfied as there are no
losing alternatives. On the other hand, assume that there is a $k < m$
such that $\{c_1, \dots,c_k\}$ is winning. As $\cR$ is an \ISPAV{} instance
there is a $k' \leq k$ such that $\vartriangleright$ restricted to $\{1, \dots, m\}$
starts with $k' \vartriangleright k'+1 \vartriangleright \dots \vartriangleright k$.
As $k < m$ the same holds for $\vartriangleright$ restricted to $\{1, \dots, m-1\}$.
Hence if we remove an alternative $c_{k^*}$ with $k^* > k$ the winner set does not change.

Finally, we claim that \TopFgapSK{} also satisfies Independence of Losing Alternatives. 
Assume first that the set of winners $W'$ under \Fgap{} is smaller than $s$. After removing
a losing alternative, the set of winners under \Fgap{} remains the same and is hence still smaller 
then $s$. It follows that the winner set under \TopFgapSK{} does not change.
Now assume that $W'$ is larger than $s$. Then, the winner set of \TopFgapSK{}
has size at least $s$. Now, removing an alternative $c_j$ for $j > s$ cannot create a larger gap
between the first $s$ alternatives. It follows that the winner set under \Fgap{} after removing
$c_j$ is still larger then $s$. This means by definition that the winner set of \TopFgapSK{}
before and after removing $c_j$ was the winner set of \ISPAV. As \ISPAV{}
satisfies Independence of Losing Alternatives, we can conclude that \TopFgapSK{} does so as well.

\subsection{Further Axioms}
Finally, we consider three classic axioms of social choice theory, namely Resistance to Clones
\citep{tideman1987independence} and two monotonicity axioms \citep{Handbook-Voting}
adapted to the shortlisting setting.

First we consider Resistance to Clones.
In many shortlisting scenarios, for example in the context of recommender systems,
it is not always clear if alternatives should be bundled together.
For example, if we want to select a number of books to recommend,
should we include each part of a trilogy separately or bundle the whole series?
Shortlisting rules that satisfy Resistance to Clones are useful because the 
outcome of the rule is the same in both cases (if all parts of the series are 
equally popular).

\begin{Axm}[Resistance to Clones]
Adding a clone of an alternative to an election does not change the outcome, i.e.,
if $E = (C,V)$ and $E^* = (C \cup \{c^*\}, V^*)$ are
two elections with $\lvert V\rvert = \lvert V^*\rvert$ such that, for all $j \leq n$,
we have $c_i \in v_j$ if and only if $c_i \in v_j^*$
for all $c_i \in C$ and $c^* \in v_j^*$ if and only if
$c_k \in v_j$ for some $k \leq m$, then $\cR(E) = \cR(E^*)$
if $c_k \not \in \cR(E)$ and $\cR(E^*) = \cR(E) \cup \{c^*\}$ if $c_k \in \cR(E)$.
\end{Axm}

Clearly, Independence implies Resistance to Clones. Hence, \Threshold{} satisfies 
Resistance to Clones.
As cloning does not change the maximal score, the same holds for \MSThreshold.
Furthermore, cloning has no effect on gaps, hence \Lgap{} and \Fgap{} satisfy Resistance to Clones.
If follows that \Approval{} also satisfies Resistance to Clones as it is a special case of \Fgap{}.

For \First{} and \Next{} it can be helpful for an alternative to be cloned.
For example, consider an election with $\score_E = (3,2,0)$.
Then the set of \First{} winners would be
$\{c_1\}$ but after cloning $c_2$, the set of \First{} winners 
is $\{c_1,c_2,c_2'\}$ where $c_2'$ is the clone of $c_2$.
Similarly let $\score_E = (2,1,0)$. Then the set of \Next{} winners for every $k$
would be $\{c_1\}$. If we clone $c_2$, then, for all $k \geq 2$, \Next{} selects
$\{c_1,c_2,c_2'\}$ 
 where $c_2'$ is the clone of $c_2$. 

Moreover, \qNCSA{} also does not satisfy Resistance to Clones.
Consider for example $0.5$-\textit{NCSA}, assume we have $10$ voters and let $\score_E = (10,7,7)$.
Then the $0.5$-\textit{NCSA}-scores of $\{c_1\}$, $\{c_1,c_2\}$ and $\{c_1,c_2,c_3\}$
are $\nicefrac{10}{1}$, $\nicefrac{14}{\sqrt{2}} \approx 9.899$ and
$\nicefrac{18}{\sqrt{3}} \approx 10.392$ respectively. Therefore $\{c_1,c_2,c_3\}$ is
the winner set. Now, if we add a clone $c_1^*$ of $c_1$ we get the $0.5$-\textit{NCSA}-scores  
$\nicefrac{10}{1}$, $\nicefrac{20}{\sqrt{2}} \approx 14.142$,
$\nicefrac{24}{\sqrt{3}} \approx 13.856$ and
$\nicefrac{28}{\sqrt{4}} = 14$ for $\{c_1\}$, $\{c_1, c_1^*\}$, $\{c_1, c_1^*,c_2\}$
and $\{c_1, c_1^*,c_2,c_3\}$ respectively. Therefore, $\{c_1, c_1^*\}$ is winning.

Finally, 
\SPAV{} generally does not satisfy Resistance to Clones
as cloning may harm an alternative.
For example, consider $2 \vartriangleright 3 \vartriangleright \dots$
and $\score_E = (2,1,0)$. Then the set of \SPAV{} winners is
$\{c_1, c_2\}$, but if we clone $c_1$, then $c_2$ is not a winner any more.
This also shows that neither \ISPAV{} nor \TopFgapSK{} are resistant to clones (set $s= 2$ and $k=2$
for the latter).

The first monotonicity axiom we consider is Set Monotonicity. It states that if one voter additionally approves the winner set, this must not change the outcome.

\begin{Axm}[Set Monotonicity] 
For any two elections $E = (C,V)$ and $E^* = (C,V^*)$ with $V = (v_1, \dots, v_n)$ and $V^* = (v^*_1, \dots, v^*_n)$,
if there exists a $j \leq n$ such that $v_j \cap \cR(E) = \emptyset$, $v^*_j  = v_j \cup \cR(E)$
and $v^*_l = v_l$ for all $l \neq j$,
then $\cR(E^*) = \cR(E)$.
\end{Axm}

All of our rules except \First{} and \MSThreshold{} with non-constant threshold function
satisfy Set Monotonicity
\begin{itemize}
\item Let $E$ be an election with $\score(E) = (2,2,1,1,1,1)$.
Then under \First{} we have $\cR(E) = \{c_1, c_2,c_3\}$.
Now if a voter who did not approve $\{c_1, c_2,c_3\}$
before approves it, then we get
$\score(E) = (3,3,2,1,1,1)$ and hence  $\cR(E) = \{c_1, c_2\}$.
\item For \MSThreshold{} first assume
\[f(n) := \begin{cases}
           n-1 &\text{ if $n$ is odd},\\
           1   &\text{ otherwise}.
           \end{cases}
\]
Let $\score(E) = (5, 2)$. Then $c_1$ is the only winner, but after adding one approval
to $c_1$ the winner set becomes $\{c_1,c_2\}$.

Now, assume $f(n) = \alpha\cdot n$ for some $0 \leq \alpha < 1$. 
First assume $\score_E(c_i) > \alpha \max(\score(E))$ and hence $c_i \in \cR(E)$.
Then after adding one approval to all winning candidates, we have
\begin{multline*}
\score_{E^*}(c_i)=\score_E(c_i) + 1> \alpha \max(\score(E)) +1 \geq\\ \alpha(\max(\score(E)) +1) =
  \alpha\max(\score(E^*)).
\end{multline*}
This implies $c_i \in \cR(E^*)$.
On the other hand, assume $\score_E(c_i) \leq \alpha \max(\score(E))$ and
hence $c_i \not \in \cR(E)$. Then $\score_{E^*}(c_i)=\score_E(c_i) < \alpha(\max(\score(E))) \leq
\alpha\max(\score(E^*))$. It follows that $c_i \not \in \cR(E^*)$.
Therefore, \MSThreshold{} satisfies Set Monotonicity for constant $f$.
\item Clearly, adding approvals for all winners can only increase the gap between winners
and non-winners. Hence \Fgap{} and \Lgap{} satisfy Set Monotonicity.
\Approval{} is a special case of \Fgap{} and hence also satisfies Set Monotonicity.
\item For \Threshold{} clearly all winning candidates are still above the threshold in $E^*$
and all non-winning candidates remain below the threshold. Hence Set Monotonicity is satisfied.
\item \SPAV: It is easy to see that a set $\{c_1, \dots, c_i\}$ is non-tiebreaking in $E$ if and only if it is non-tiebreaking in $E^*$. Hence, \SPAV{} satisfies Set Monotonicity.
\item \Next:  Let $\{c_1, \dots, c_{\ell}\}$ be the winner set. First, let $i < \ell$.
By choice of $\ell$ we have $\score_E(c_i) \leq \sum_{j = 1}^k \score_E(c_{i+j})$.
As $i +1 \leq \ell$ we have $\score_{E^*}(c_i) = \score_E(c_i) +1 \leq
\sum_{j = 1}^k \score_E(c_{i+j}) + 1 \leq \sum_{j = 1}^k \score_{E^*}(c_{i+j})$.
On the other hand $\score_{E^*}(c_{\ell}) > \score_E(c_{\ell}) >
\sum_{j = 1}^k \score_E(c_{{\ell}+j})  = \sum_{j = 1}^k \score_{E^*}(c_{{\ell}+j})$.
It follows that $\{c_1, \dots, c_{\ell}\}$ is also the winner set under $E^*$.
\item \qNCSA: 
Let $W=\{c_1, \dots, c_k\}$ be a the largest set with maximum \qNCSA-score. 
It holds that
\begin{align*}
\score_{E^*}^{\qNCSA}(W)&=\frac{1}{\lvert W\rvert^q}\sum_{c \in W}(2\score_{E^*}(c)  - n)=\\
&=\frac{1}{\lvert W\rvert^q}\sum_{c \in W}(2(\score_E(c) + 1) - n)=\\
&= \score_{E}^{\qNCSA}(W) + \frac{2\lvert W\rvert}{\lvert W\rvert^q}\\
&= \score_{E}^{\qNCSA}(W) + 2\lvert W\rvert^{1-q}
\end{align*}
Now consider a set $W'=\{c_1, \dots, c_i\}$. Consider first $i < k$.
Then we have by the same argument as above
\[
\score_{E^*}^{\qNCSA}(W')=
 \score_{E}^{\qNCSA}(W') + 2\lvert W'\rvert^{1-q}
\]
Now, by the choice of $k$ we have $\score_E^{\qNCSA}(W') \leq \score_E^{\qNCSA}(W')$
and because $i < k$ we have $\lvert W' \rvert < \lvert W \rvert$.
It follows that $\score_{E^*}^{\qNCSA}(W') \leq \score_{E^*}^{\qNCSA}(W')$.

Now assume $k < i$. Then we have 
\begin{align*}
\score_{E^*}^{\qNCSA}(W')&=\\
&=\frac{1}{\lvert W'\rvert^q}(\sum_{j \leq k}(2(\score_E(c_j) + 1) - n) + \sum_{j > k}(2\score_E(c_j) - n)=\\
&= \score_{E}^{\qNCSA}(W') + \frac{2\lvert W\rvert}{\lvert W'\rvert^q}
\end{align*}
Again, by the choice of $k$ we have $\score_E^{\qNCSA}(W') \leq \score_E^{\qNCSA}(W')$.
Moreover, because $i > k$ we have $\lvert W' \rvert > \lvert W \rvert$
and hence 
\[\frac{2\lvert W\rvert}{\lvert W' \rvert^q} < \frac{2\lvert W\rvert}{\lvert W\rvert^q}\]
It follows that $\score_{E^*}^{\qNCSA}(W') \leq \score_{E^*}^{\qNCSA}(W')$
and hence $W$ is still the largest set with maximal \qNCSA-score.
\item Finally, consider \TopFgapSK. 
Assume first that the set of winners $W'$ under \Fgap{} is smaller than $s$. As
\Fgap{} satisfies Set Monotonicity, $W'$ remains the winner set in $E^*$.
It is still smaller than $s$ and therefore still the winner under \TopFgapSK.
Now assume that $W'$ is larger than $s$. Then, the winner set of \TopFgapSK{}
has size at least $s$. Adding one approval to the first $s$ alternatives does not create 
a new $k$ gap between them. It follows that the winner set under \Fgap{}
is still larger then $s$. This means by definition that the winner set of \TopFgapSK{}
in $E$ and $E^*$ is the winner set of \ISPAV. As \ISPAV{}
satisfies Set Monotonicity, we can conclude that \TopFgapSK{} does so as well.

\end{itemize}
Set Monotonicity is a very natural axiom for many 
applications, so the fact that \First{} does not satisfy it
makes it hard to recommend the rule in most situations.
We can strengthen this axiom as follows:
a voter that previously disapproved all winning alternatives changes her mind and 
now approves a superset of all (previously) winning alternatives; this should not
change the set of winning alternatives.
This is a useful property as it guarantees that if an additional voter enters the election, who agrees with the set of currently winning alternatives but might approve additional alternatives, then the set of winning alternatives remains the same and, in particular, does not expand.

\begin{Axm}[Superset Monotonicity]
Let $E = (C,V = (v_1, \dots, v_n))$ be an election.
If $E^* = (C,V^* = (v^*_1, \dots, v^*_n))$  is another election such that for some $j \leq n$
we have $v_j \cap \cR(E) = \emptyset$, 
$\cR(E) \subseteq v^*_j$
and $v^*_l = v_l$ for all $l \neq j$,
then $\cR(E) = \cR(E^*)$.
\end{Axm}

In contrast to Set Monotonicity, only few rules satisfy Superset Monotonicity.
Let us first show that \First, \Threshold, \MSThreshold, \Next, \Lgap{} and \SPAV{}
do not satisfy Superset Monotonicity.

\begin{itemize}
\item Clearly, Superset Monotonicity implies Set Monotonicity,
hence \First{} cannot satisfy Superset Monotonicity.
\item First, consider an election $E$ with $n= 3$ such that $\score(E) = (2,1)$.
Then $\cR(E) = \{c_1\}$ under \Threshold{} with $f = \nicefrac{n}{2}$.
Now, if one voter additionally approves $\{c_1,c_2\}$, 
then $\cR(E) = \{c_1,c_2\}$.
\item Next, consider an election $E$ such that $\score(E) = (4,2)$.
Consider \MSThreshold{} with $f(n) = \nicefrac{n}{2}$. Then
$\cR(E) = \{c_1\}$. Now, if one voter additionally approves $\{c_1,c_2\}$, 
then $\cR(E) = \{c_1,c_2\}$.
\item
For \Next, consider an election $E$ such that $\score(E) = (3,1,1)$.
Then the winner set under \Next{} is $\{c_1\}$.
Now, if a voter changes her mind and additionally approves 
all three alternatives, then all three alternatives become winners
under \Next{} (for every $k>1$).
\item 
Next, consider an election $E$ such that $\score(E) = (2,1,0)$.
For \Lgap, $\cR(E) = \{c_1\}$.
If one voter additionally approves $\{c_1,c_2\}$, 
then $\score(E) = (3,2,0)$ and $\cR(E) = \{c_1,c_2\}$.
\item 
For \SPAV, consider an election $E$ with $\score(E) = (2,1,1)$
and $2 \vartriangleright 1 \vartriangleright 3 \vartriangleright 0$. Then $\cR(E) = \{c_1\}$.
Now, if one voter additionally approves $\{c_1,c_2\}$, 
then $\cR(E) = \{c_1,c_2\}$.
\item For \textit{$0.5$-NCSA}, consider an election $E$ with $\score(E) = (90,90,67)$ and $n=98$.
Here, the winner set is $\{c_1, c_2\}$ with $\score_{E}^{\qNCSA}(\{c_1,c_2\})\approx 115.97$ and 
$\score_{E}^{\qNCSA}(\{c_1,c_2, c_3\})\approx 115.47$.
However, for $\score(E^*) = (91,91,68)$ (one voter who previously approved no one, now approves every candidate), we obtain
$\score_{E^*}^{\qNCSA}(\{c_1,c_2\})\approx 118.79$ and 
$\score_{E^*}^{\qNCSA}(\{c_1,c_2, c_3\})\approx 118.93$.
\end{itemize}

In contrast, \ISPAV{} satisfies Superset Monotonicity as any ties between winners remain.
Moreover, as the size of the gap between winners and non-winners
cannot decrease and gaps within the winner set remain, \Fgap{} satisfies
Superset Monotonicity for all $k$ (which includes \Approval). 
For this reason \TopFgapSK{} also satisfies Superset Monotonicity by an analogous argument
as for Set Monotonicity.

In general, the axioms discussed in this section can be seen as axioms
about the stability of the winner set under specific changes to the election.
We observed that \First{} and, to a lesser degree, \SPAV{} and
\Next{} did not perform well in this regard. On the other hand, 
it seems that the winner set of \Fgap{} and \Approval{} are particularly stable, as they are the
only rule that satisfies all three axioms considered in this section.

\section{Clustering Algorithms as Shortlisting Methods}\label{sec:clustering}

Let us briefly discuss the relation between clustering algorithms and shortlisting methods.
The goal of shortlisting is essentially to classify 
some alternatives as most suitable based on their approval score.
The machine learning literature offers a wide variety of 
clustering algorithms that can perform such a classification.

In the following, we describe how any clustering algorithm can be translated into an approval-based variable multi-winner rule that satisfies Anonymity. For most clustering algorithms, the corresponding rule also satisfies Neutrality, Efficiency and is non-tiebreaking, and thus yields a shortlisting method.
The procedure works as follows:
Let $E=(C,V)$.
We use $\score(E)$ as input for a clustering algorithm.
This algorithm produces a partition $S_1,\dots,S_\beta$ of $\score(E)$.
The winner set is
the partition that contains the highest score, i.e., the winner set consists of those candidates whose scores are contained in the selected partition.

As this procedure is based on $\score(E)$, the resulting approval-based variable multi-winner rule is clearly anonymous.
To show that the resulting rule is a shortlisting rule, we require the following two additional assumptions:
\begin{enumerate}
\item \emph{The clustering algorithm yields the same result for any permutation of $\score(E)$.} If this is the case, the resulting rule is also neutral.
\item \emph{The algorithm outputs clusters that are non-intersecting intervals.} If this is the case, the result rule is non-tiebreaking (since clusters do not intersect). It is also efficient, as the ``winning'' cluster is an interval containing the largest score.
\end{enumerate}
These are indeed conditions that any reasonable clustering algorithm satisfies. 

As an illustration, let us consider linkage-based algorithms~\citep{shalev2014understanding}.
Linkage-based algorithms work in rounds and start with the partition of $\score(E)$ into singletons. Then, in each round, two sets (clusters) are merged until a stopping criterion is satisfied.
One important type of linkage-based algorithms are those where always the two clusters with minimum distance are merged.
Thus, such algorithms are specified by two features: a distance metric for sets (to select the next sets to be merged) and a stopping criterion.
We assume that if two or more pairs of sets have the same distance, then the pair containing the smallest element are merged.
Following  Shalev-Shwartz and Ben-David~\citep{shalev2014understanding}, we consider three distance measures:
the minimum distance between sets (Single Linkage):
\begin{align}
d_{\min}(A,B) &= \min\left\{\lvert x-y\rvert: x\in A, y\in B\right\},\label{eq:min}
\end{align}
the average distance between sets (Average Linkage)
\begin{align}
d_{\mathrm{aver}}(A,B) &= \frac{1}{\lvert A\rvert\lvert B\rvert}\sum_{x\in A, y\in B} \lvert x-y\rvert,\label{eq:aver}
\end{align}
and the maximum distance between sets (Max Linkage)
\begin{align}
d_{\max}(A,B) &= \max\left\{\lvert x-y\rvert: x\in A, y\in B\right\}.\label{eq:max}
\end{align}

These three methods can be combined with arbitrary stopping criteria; we consider two:
(A)~stopping as soon as only $\beta$~clusters remain, and (B)~stopping as soon as every pair of clusters has a distance of $\geq \alpha$.
Interestingly, two of our previously proposed methods correspond to linkage-based algorithms:
First, if we combine the minimum distance with stopping criterion~(A) for $\beta=2$,
we obtain the \Lgap{} rule.
Secondly, if we use the minimum distance and impose a distance upper-bound of $\alpha=k$ (stopping criterion~B), we obtain the \Fgap{} rule.
Thirdly, if we seek winner sets of size roughly $\nicefrac m k$ for some positive integer~$k$,
stopping criterion~(A) with $\beta=k$ is a possible choice.

We see that the literature on clustering algorithms yields a large number of shortlisting methods. The inherent disadvantage of this approach is that cluster algorithms generally treat all clusters as equally important whereas for shortlisting methods the winning set of candidates is clearly most important. This difference becomes most pronounced when a clustering algorithm produces several clusters; only the ``winning'' cluster is relevant for the resulting shortlisting method.
That being said, we identified two clustering algorithms that indeed corresponded to sensible shortlisting methods (\Fgap{} and \Lgap{}), showing that this approach can be fruitful.

\section{Experiments}\label{sec:experiments}

In numerical experiments, we want to evaluate the characteristics of the considered shortlisting rules. The Python code used to run these experiments is available \citep{martin_lackner_2020_3821983}).
We use three data sets for our experiments: two synthetic data sets (``bias model'' and ``noise model'') as well as data from a real-world shortlisting scenario, the nomination process for the Hugo awards.

\subsection{Synthetic Data}

\subsubsection*{Basic setup}
Both synthetic data sets have the same basic setup.
We assume a shortlisting scenario with 100 voters and 30 alternatives.
Each alternative $c$ has an objective quality~$q_c$, which is a real number in $[0,1]$.
For each alternative, we generate~$q_c$ from a truncated normal (Gauss) distribution with mean $0.75$ and standard deviation $0.2$, restricted to values in $[0,1]$.
This is chosen to model difficult shortlisting scenarios with several strong candidates (with an objective quality $q_c$ close to 1).
Our base assumption is that voters approve an alternative with likelihood $q_c$.
Thus, the approval score of alternatives are binomially distributed, specifically $\score_E(c)\sim B(100, q_c)$.
We then modify this assumption to study two complications for shortlisting: imperfect quality estimates (noise) and biased voters.

\subsubsection*{The noise model}
This model is controlled by a variable $\lambda\in[0,1]$.
We assume that voters do not perfectly perceive the quality of alternatives, but with increasing $\lambda$ fail to differentiate between alternatives.
Instead of our base assumption that each voter approves an alternative $c$ with likelihood $q_c$, we change this likelihood to $(1-\lambda)q_c + 0.5\lambda$.
Thus, for $\lambda=0$ this model coincides with our base assumption; for $\lambda=1$ we have complete noise, i.e., all alternatives are approved with likelihood $0.5$.
As $\lambda$ increases from 0 to 1, the amount of noise increases, or, in other words, the voters become less able to judge the quality of alternatives.

\subsubsection*{The bias model}
In this model we assume that a proportion of the voters are biased against (roughly) half of the alternatives; we call these alternatives disadvantaged.
Biased voters approve these alternatives only with likelihood $0.5\cdot q_c$, i.e., they perceive their quality as only half of their true quality.
We assign each alternative with likelihood 0.5 to the set of disadvantaged alternatives.
In addition, the alternative with the highest quality is always disadvantaged.\footnote{
We make this assumption because a bias \emph{only} against low-quality alternatives is actually helpful for the shortlisting task---this effect would distort the negative consequences of bias.}
We control the amount of bias via a variable $\gamma\in[0,1]$: a subset of voters of size $\left\lfloor 100\cdot \gamma\right\rfloor$ is biased; for the remaining voters our base assumption applies.
As in the noise model, as $\gamma$ increases from 0 to 1 the shortlisting task becomes harder as the approval scores less and less reflects the actual quality of alternatives.

\subsubsection*{Instances}
For each of the two models, we generate $1,000$ instances for each $\lambda\in\{0, 0.05, 0.1, \dots, 0.95, 1\}$, thus resulting in $20,000$ instances per model.

\subsection{The Hugo Awards Data Set}

The Hugo Awards are annual awards for works in science-fiction. Each year, awards are given in roughly 20 categories. The Hugo awards are particularly interesting for our paper as the nomination of candidates is based on voting and the submitted votes are made publicly available (this distinguishes the Hugo awards from many other literary awards with confidential nomination procedures).

The Hugo shortlisting (nomination) process works as follows. Each voter can nominate up to five candidates per category. This yields an approval-based election exactly as defined in Section~\ref{sec:formalmodel}. For each category, a shortlist of (usually) six candidates is selected.
This shortlist, however, does not necessarily consist of the six candidates with the largest approval scores. Instead, a voting rule called ``E Pluribus Hugo'' is used. This is not a shortlisting rule in our sense (Definition~\ref{def:shortlisting}), since it is not Non-tiebreaking and fails Efficiency.\footnotemark{} However, ``E Pluribus Hugo'' generally selects candidates with high approval scores and hence the actual winners are always among the top-seven candidates with the largest approval scores.
In Figure~\ref{fig:hugo-winning}, we display in which position (when sorted by approval scores) the actual winner in the second stage is found.
Note that there are three instances where a candidate in position~7 is winning. As ``E Pluribus Hugo'' always selects six candidates, this shows that either Non-tiebreaking or Efficiency is violated in these instances.

\footnotetext{We briefly describe ``E Pluribus Hugo''. This is an approval-based variable multi-winner rule based on an elimination process with two scores: approval scores and fractional approval scores. Let $N_E(c)=\{i\in N : c\in v_i\}$. Fractional approval scores are defined as $\fscore_E(c_j)=\sum_{i\in N_E(c_j)} \frac{1}{\lvert v_i\rvert}$, i.e., voters can contribute at most $1$ to the total score of all candidates.
Each round the two candidates with the lowest fractional approval scores are selected.
Out of these two, the one with the lower approval score is eliminated. This step is repeated with a reduced set of candidates (and updated fractional approval scores) until only six candidates remain. We omit details how ties are handled in this process and refer to \citet{HugoManip}, who introduced ``E Pluribus Hugo'' under the name SDV-LPE. This paper also contains a discussion
of why this rule was chosen (in reaction to strategic voting in previous years) and its merits for this specific application.}

\begin{figure}
\centering
\includegraphics[width=0.7\textwidth]{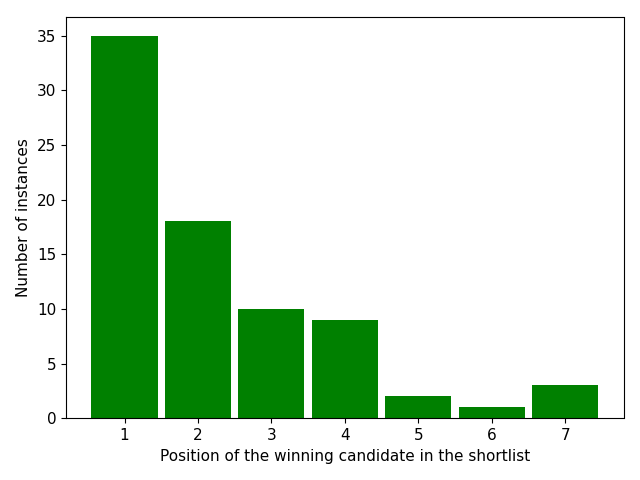}
\caption{Shortlist positions of the actual winners when sorted by approval scores.}
\label{fig:hugo-winning}
\end{figure}

Our data set is based on the years 2018--2021, comprising a total of 78 shortlisting elections.
The voting data for these years is publicly available on the Hugo website \url{https://www.thehugoawards.org/}.
For each election we recorded the actual winner in the second stage (also based on voting, but with a different, larger set of voters).  The data files are available along-side our code \citep{martin_lackner_2020_3821983}.

In a sense this is an ideal data set to test our results, as the scenario exactly matches our formal model. However, there are two caveats to be noted.
First, the true winner is always among the first seven candidates. Thus, \ispav{7} will always select a shortlist containing the true winner.
Conversely, any shortlisting rule that outputs shortlists with more than seven candidates is non-optimal on this data set. This peculiarity has to be kept in mind when interpreting our results.

Secondly, the shortlisting process of the Hugo awards has been a contentious matter with recorded attempts of organized strategic voting (this is described briefly by \citet{HugoManip}).
As a consequence, the voting results in the shortlisting stage can differ significantly from 
the results in the second stage (with a much larger set of voters). 
It is therefore reasonable to assume that this data set contains ``hard'' instances, i.e., it is difficult to find short shortlists.

\subsection{Precision and average size}

We use two metrics to evaluate shortlisting rules. To be able to speak about successful shortlisting, we assume that we know for each shortlisting instance $E_\ell$
the actual winner in the second stage, i.e., the candidate that is the winner \emph{among} shortlisted candidates; let this candidate be $c^*_\ell$.
For the synthetic data sets, $c^*_\ell$ is the candidate with the highest objective quality;
for the Hugo data set it is the candidate that actually won the Hugo award (which was selected from the shortlisted candidates).

Given a set of shortlisting instances $\{E_1,\dots,E_N\}$, we evaluate a shortlisting rule $\cR$  with respect to the following two metrics.

\begin{enumerate}
\item \emph{Precision} is the true winner ($c^*_\ell$) being contained in $\cR$'s winner sets:
\begin{align}
\frac{1}{N}\cdot \left\lvert \left\{1\leq \ell \leq N : c^*_\ell \in \cR(E_\ell) \right\}\right\rvert.
\end{align}
\item \emph{Average size} is the average size of $\cR$'s winner sets:
\begin{align}
\frac{1}{N}\sum_{\ell=1}^N \lvert \cR(E_i)\rvert.
\end{align} 

\end{enumerate}

A shortlisting is desirable if it has a high precision and small average size.
However, observe that these two metrics are difficult to reconcile. 
The easiest way to achieve high precision is to output large shortlists, and conversely, 
a small average size will likely result in a lower precision.

\subsection{Experiment 1: Increasing Noise and Bias}

Experiment~1 applies only to the two synthetic data sets. The goal is to see how different shortlisting rules deal with increasingly noisy/biased data.
We restrict our attention to six shortlisting rules, for which the results are particularly 
instructive:
\Approval{},  \Ffgap{}, \Threshold{}, \SPAV{}, \First{}, \textit{Top-$10$-First-$5$-Gap}, \Lgap{}, and  \emph{$0.5$-NCSA}.
For \SPAV{}, we use the priority order $4\rhd 5 \rhd 6 \rhd \dots$,
i.e., we use the \ispav{4} rule.
Finally, we choose \Majority{} as representative for threshold rules.
The results for \emph{Max-Score-$f$-Threshold} with $f(x)=0.5x$ were very similar to 
\Majority{} and are thus omitted.

    \begin{figure}
        \centering
        \begin{subfigure}[b]{0.5\textwidth}  
            \centering 
            \includegraphics[width=\columnwidth]{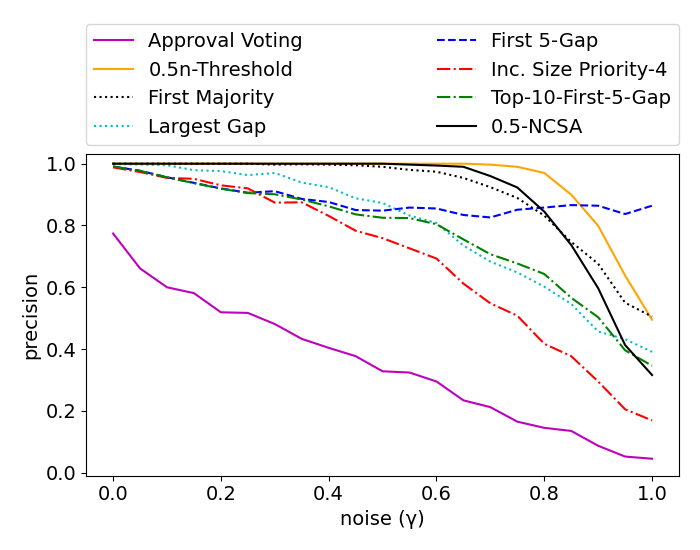}
            \caption{Inclusion of objectively best alternative} %
            \label{fig:noise-normal-inclbest}
        \end{subfigure}%
		\hfill
        \begin{subfigure}[b]{0.5\textwidth}   
            \centering 
            \includegraphics[width=\columnwidth]{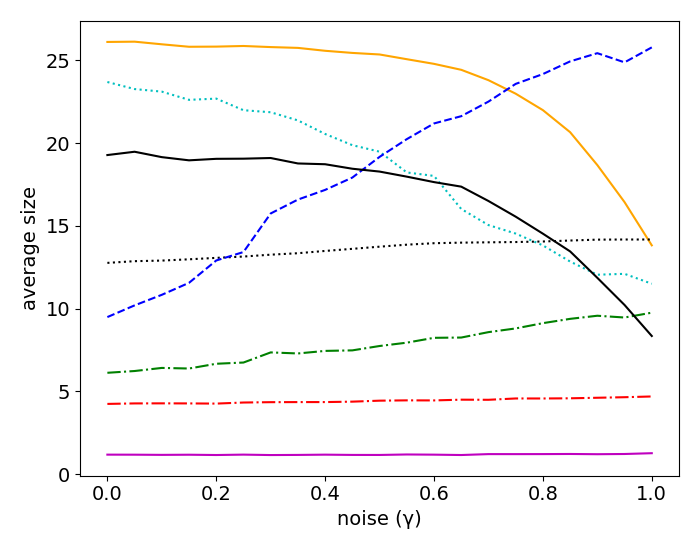}
            \caption{Average size of winner sets}    
            \label{fig:noise-normal-size}
        \end{subfigure}
   \caption{Numerical simulations for the noise model}\label{fig:noise-normal}
\end{figure}

Our comparison of shortlisting rules is visualized in Figure~\ref{fig:noise-normal} for the noise model and Figure~\ref{fig:bias-normal} for the bias model. %
Each data point in these figures (corresponding to a specific~$\lambda$) is based on $N=1000$ instances $E_1,\dots, E_{N}$.

The orthogonal nature of precision and average seen
can be seen clearly when comparing \Approval{} and  \Majority{}: \Approval{} returns rather small winner sets (as seen in Figs.~\ref{fig:noise-normal-size} and~\ref{fig:bias-normal-size}), but if~$\lambda$ %
increases,
the objectively best alternative is often not contained in the winner set (Figs.~\ref{fig:noise-normal-inclbest} and~\ref{fig:bias-normal-inclbest}). \Majority{} has large winner sets, but is likely to contain the objectively best alternative even for large $\lambda$ (up to $\lambda\approx0.8$).
If the average size of winner sets remains roughly constant (\ISPAV{}, \First{}, \Approval{}), then the precision reduces with increasing noise/bias ($\lambda$).

    \begin{figure}
        \centering
        \begin{subfigure}[b]{0.5\textwidth}  
            \centering 
            \includegraphics[width=\columnwidth]{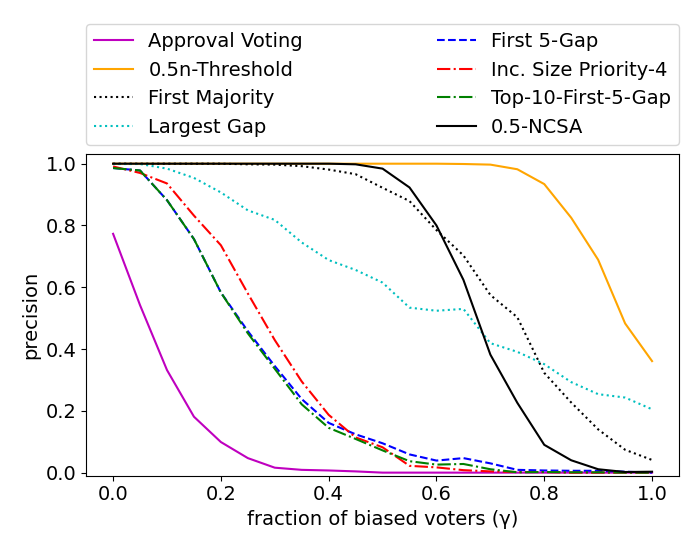}
            \caption{Inclusion of objectively best alternative} %
            \label{fig:bias-normal-inclbest}
        \end{subfigure}\hfill
        \begin{subfigure}[b]{0.5\textwidth}   
            \centering 
            \includegraphics[width=\columnwidth]{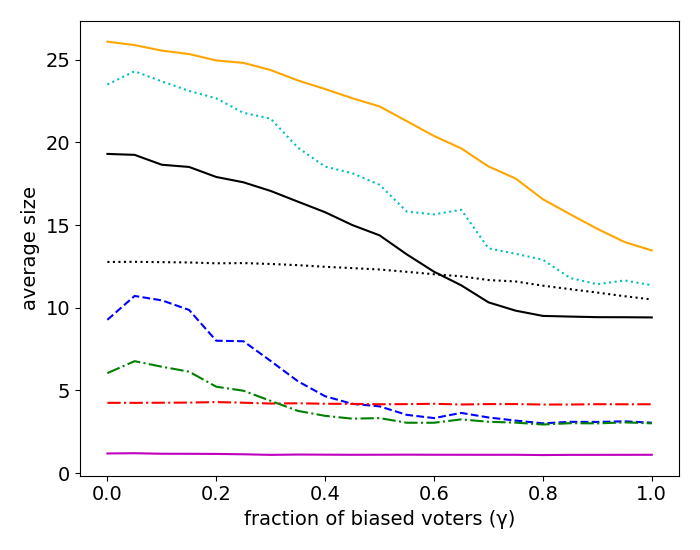}
            \caption{Average size of winner sets}    
            \label{fig:bias-normal-size}
        \end{subfigure}
   \caption{Numerical simulations for the bias model}\label{fig:bias-normal}
\end{figure}

\SPAV{} (with the considered priority order) is a noteworthy alternative to \Approval. It has an only slightly larger average size (roughly $1$ vs $4$), while having a significantly larger chance to include the objectively best alternative. As it is generally not necessary to have extremely small winner sets in shortlisting processes, we view \SPAV{} (with a sensibly chosen priority order) as superior to \Approval.

Considering the noise model (Fig.~\ref{fig:noise-normal}), 
we see a very interesting property of \Ffgap: it is the only rule where the size of winner sets significantly adjusts to increasing noise. If $\lambda$ increases, the differences between the approval scores vanishes and thus fewer $5$-gaps exist. As a consequence, the winner sets increase in size. This is a highly desirable behavior, as it allows \Ffgap{} to maintain a high likelihood of containing the objectively best alternative without producing very large shortlists for low-noise instances.

Two other rules also show this behavior: \textit{Top-$10$-First-$5$-Gap} and 
\First{}, albeit both to only a small degree.
\textit{Top-$10$-First-$5$-Gap} achieves the same precision as  \Ffgap{} until $\gamma$ reaches $\approx 0.5$ after which its precision deteriorates. On the other hand, note that \textit{Top-$10$-First-$5$-Gap} has a considerably smaller average size.
For \Lgap{}, \Majority{}, and \qNCSA{}, we see the opposite effect: winner sets are large for low noise but decrease with increasing $\lambda$. This is not a sensible behavior; note that \First{} achieves better precision with much smaller average size.

For the bias model, we do not observe any shortlisting rule that reacts to an increase
in bias with a larger average size.

To sum up, our experiments show the behavior of shortlisting rules with accurate and inaccurate voters, and the trade-off between large and small winner set sizes. In these experiments, we see two shortlisting rules with particularly favorable characteristics:
\begin{enumerate}
\item \SPAV{} produces small winner sets with good precision. Thus, it shows a certain robustness to a noisy selection process, as is desirable in shortlisting settings.
\item \Fgap{} manages to adapt in high-noise settings by increasing the winner set size, the only rule with this distinct feature. This makes it particularly recommendable in settings with unclear outcomes (few or many best alternatives), where a flexible shortlisting method is required. As we will see in the next experiment, however, \Fgap{} on its own can be insufficient, which leads us to recommending the related \TopFgapSK{} rule instead.
\end{enumerate}

\subsection{Experiment 2: Tradeoffs between Precision and Size}\label{sec:exp2}

In this second experiment, we want to study the tradeoff between precision and size in more depth and for many more shortlisting rules.
Here, we put particular emphasis on the Hugo data set (but also consider both synthetic sets).
To this end, we represent shortlisting rules as points in a two-dimensional plane with average size as x-axis and precision as y-axis. 
Figure~\ref{fig:exp2:hugo} shows these results for the Hugo data set (points are averaged over 78 instances), Figure~\ref{fig:exp2:noise} shows these results for the noise data set (no noise to moderate noise, i.e., $\lambda\in[0,0.5]$, yielding 10,000 instances),
and Figure~\ref{fig:exp2:bias} for the bias model (also for $\lambda\in[0,0.5]$, 10,000 instances).

These plots can be understood as follows. Ideal shortlisting rules lie in the top left corner (high precision, low average size). As this is generally unachievable, we have to choose a
compromise between the two metrics. 
The gray area shows the space in which such a compromise has to be found (when
choosing from shortlisting rules that are studied in this paper). 

We will now explain the gray area in more detail:
For Experiment~2, we consider all shortlisting rules defined in Section~\ref{sec:rules} with the following parameters.
For $\alpha\in\{0,0.01, 0.02, \dots, 1\}$, we consider
\begin{itemize}
\item \Next{} for $k\in\{2,3\}$,
\item \Threshold{} and \emph{Max-Score-$f$-Threshold} with $f(n)=\lfloor\alpha \cdot n\rfloor$,
\item \qNCSA{} with $q=\alpha$,
\item \Fgap{} with $k=\lfloor\alpha \cdot n \rfloor$ and with $k=\lfloor\alpha \cdot \max{\score(E)}\rfloor$,
\item \ISPAV{} with priority orders of the form $s\rhd s+1\rhd \dots$ (\ispav{s}) for $2\leq s\leq m$,
\item \TopFgapSK{} with $2\leq s\leq m$ and $k \in \{\lfloor\alpha \cdot n \rfloor, \lfloor\alpha \cdot \max{\score(E)}\rfloor\}$.
\end{itemize}
Each shortlisting rule yields a point in this two-dimensional space.
Shortlisting rules with one parameter are displayed as lines.
We can compute a Pareto frontier consisting of all points that do not
have another point above and to the left of it.
The boundary of the gray area shows this Pareto frontier.
Consequently, voting rules close to this frontier represent a more beneficial tradeoff between precision and average size.

\subsubsection*{Results for the Hugo data set}

When looking at Figure~\ref{fig:exp2:hugo}, we see as expected that \ispav{7} achieves a precision of $1$ and an average size slightly above 7 (due to ties).
We furthermore see that \ispav{4}, \ispav{5}, and \ispav{6} are all very close to the 
Pareto frontier.
This raises the question whether \ISPAV{} is an ideal choice for this data set.
While this class is a good choice, it can be improved by \TopFgapSK.
In Table~\ref{tab:isp-comp}, we exemplarily show the precision and average size values for \ispav{6}, and \ispav{7} alongside 
shortlisting rules that achieve a smaller average size with the same (or better) precision.
This table gives an indication how to use \TopFgapSK in a real-world shortlisting task:
First, choose a sensible maximum size of a shortlist; in the case of Hugo awards this was chosen to be six (and was five prior to 2017).
Then, identify a bound that constitutes a significant gap; this bound can be chosen conservatively.
In the Hugo data set, a sensible choice appears to be $30\%$ of voters.
That is, 
if we encounter a gap (in the sense of \Fgap{}) in $\sc(E)$ of more than $0.3n$, we cut the shortlist at this point if this leads to a shorter shortlist.

\begin{figure}
\includegraphics[width=\columnwidth]{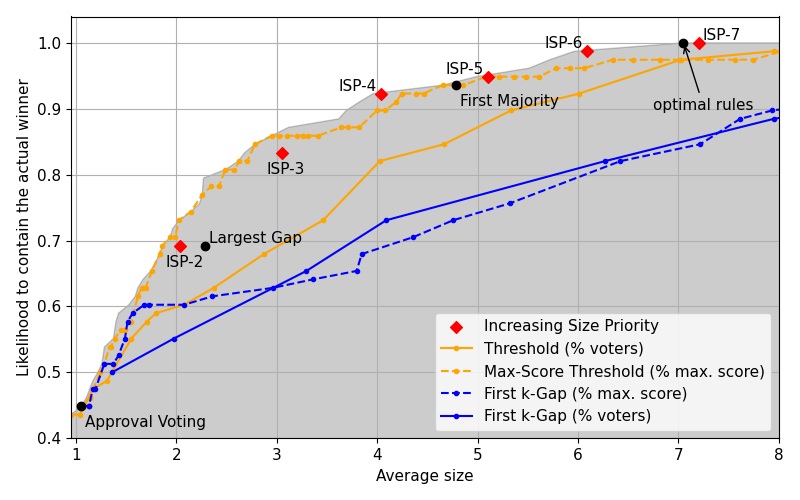}
\caption{Results for the Hugo data set (Experiment 2).}\label{fig:exp2:hugo}
\end{figure}

\begin{table}
\centering
\begin{tabular}{l| cc}
 \textbf{shortlisting rule               }& \textbf{average size }& \textbf{precision}\\
\hline
  \ispav{7}                      &       7.205  &      1.000\\
\hline
  \TopsFgapSK{7}{\lfloor\alpha \cdot n \rfloor} for $\alpha\in[0.31,0.39]$  &        7.128 &        1.000\\
  \TopsFgapSK{7}{\lfloor\alpha \cdot \max{\score(E)}\rfloor} for $\alpha\in[0.70,0.72]$  &       
        7.128 &        1.000\\
  \TopsFgapSK{7}{\lfloor\alpha \cdot n \rfloor} for $\alpha\in[0.27,0.30]$ &        7.051 &        1.000\\
  \TopsFgapSK{7}{\lfloor\alpha \cdot \max{\score(E)}\rfloor} for $\alpha=0.69$   &        7.051 &        1.000\\        
  &&\\
    \ispav{6}               &         6.090    &    0.987\\
\hline
  \TopsFgapSK{6}{\lfloor\alpha \cdot n \rfloor} for $\alpha\in[0.31,0.39]$         &       6.026 &        0.987\\
  \TopsFgapSK{6}{\lfloor\alpha \cdot \max{\score(E)}\rfloor} for $\alpha\in[0.70,0.72]$       &        6.026 &        0.987\\
  \TopsFgapSK{6}{\lfloor\alpha \cdot n \rfloor} for $\alpha\in[0.27,0.30]$         &        5.962 &        0.987\\
  \TopsFgapSK{6}{\lfloor\alpha \cdot \max{\score(E)}\rfloor} for $\alpha=0.69$     &        5.962 &        0.987
\end{tabular}
\caption{Shortlisting rules that are superior to \ispav{6} and \ispav{7} in the Hugo data set.
}\label{tab:isp-comp}
\end{table}

Let us now consider other shortlisting rules.
We see that \MSThreshold{} closely traces the Pareto frontier and thus is a very good choice 
for selecting a compromise between precision and average size. 
\Threshold{} and \Fgap{} are less convincing.
\qNCSA{} performs even worse, as very often candidates have approval scores of less than $0.5n$.
Therefore \qNCSA{} selects mostly empty sets and is thus not visible in Figure~\ref{fig:exp2:hugo}  (cf.\ Observation~\ref{obs:qncsa}).
A notable unparameterized rule is \First{}, which is very close to the Pareto frontier.

To sum up our results for the Hugo data set, we identify the following shortlisting rules
as particularly suitable.
\TopsFgapSK{7}{\lfloor\alpha \cdot n \rfloor} for $\alpha\in[0.27,0.30]$
and \TopsFgapSK{7}{\lfloor0.69 \cdot \max{\score(E)}\rfloor} achieve a precision of~1
with the smallest average size ($7.051$); in Figure~\ref{fig:exp2:hugo} these rules correspond to the point labeled ``optimal rules''. 
In general, \ISPAV{} and \MSThreshold{} achieve a very good compromises between precision
and average size.

\subsubsection*{Results for the noise and bias models}

\begin{figure}
\includegraphics[width=\columnwidth]{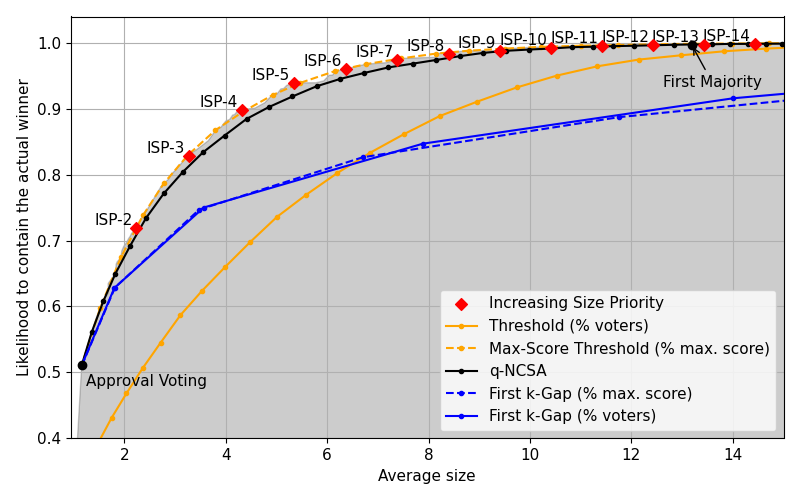}
\caption{Results for the noise model (Experiment 2).}\label{fig:exp2:noise}
\end{figure}

Figure~\ref{fig:exp2:noise} shows the results for the noise model. We see that also here \ISPAV{} and \MSThreshold{} are very close to the Pareto frontier. The same holds for \First{}.
A major difference to the Hugo data set is the performance of \qNCSA.
As candidates generally have approval scores of more than $0.5n$, \qNCSA works as intended with points close to the Pareto frontier.
As before, \Threshold{} and \Fgap{} are less convincing.

\begin{figure}
\includegraphics[width=\columnwidth]{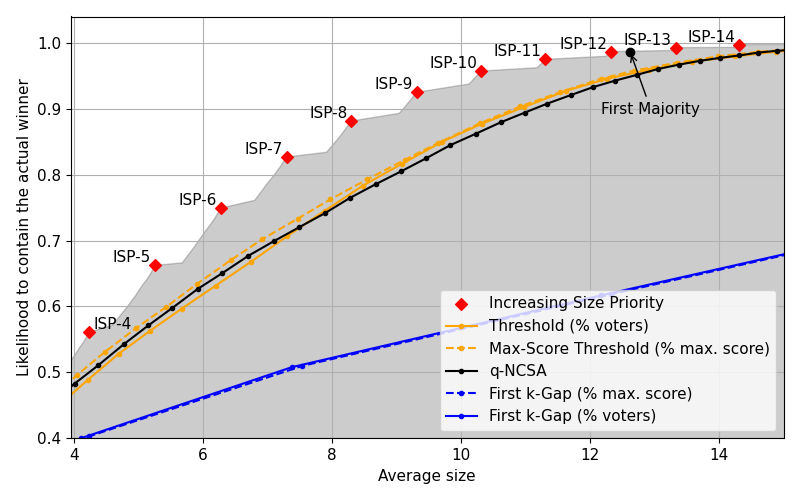}
\caption{Results for the bias model (Experiment 2).}\label{fig:exp2:bias}
\end{figure}

The bias model is a scenario, where some high-quality candidates receive too few approvals.
In Figure~\ref{fig:exp2:bias}, we see that this is a tough problem. The only recommendable
shortlisting rules are \ISPAV{} rules. By simply shortlisting the top-$k$ candidates, there
is a certain chance to also shortlist high-quality but disadvantaged candidates.
We remark that the Pareto frontier between \ispav-points is due to \TopFgapSK{} rules.

\section{Discussion}\label{sec:concl}

Based on our analysis, we recommend three shortlisting methods: \SPAV{}, \TopFgapSK{}, and \Threshold{}.
Let us discuss their advantages and disadvantages:

\begin{itemize}
\item \SPAV{}, in particular \ISPAV{}, is recommendable if the size of the winner set is of particular importance, e.g., in highly structured shortlisting processes such as the nomination for awards. 
\ISPAV{} exhibits good axiomatic properties (cf.\ Table~\ref{tab:Results}) as well as a very solid behavior in our numerical experiments.
In particular for the bias data set, where a (unknown) subset of candidates is discriminated against, \ISPAV{} appears to be the best choice. By selecting $k$~candidates with the highest approval scores (or more in case of ties), the differences in approval scores within the selected group are ignored and thus disadvantaged, high-quality candidates have a better chance to be chosen.
On the other hand, \ISPAV{} makes limited use of the available approval preferences and thus can be seen as a good choice mostly in settings with limited trust in voters' accuracy.
When voters are expected to have good estimates of the candidates' qualities, the following two shortlisting rules are better suited.

\item Our axiomatic analysis reveals \Fgap{} as a particularly strong rule in that it is the minimal rule satisfying $\ell$-Stability. 
Furthermore, it is the only rule that adapts to increasing noise in our simulations.
However, we have seen in Experiment~2 (Section~\ref{sec:exp2}) that \Fgap{} is prone to choosing winner sets that are larger than necessary.
Thus, we recommend to use \TopFgapSK{} instead. 
\TopFgapSK{} shares most axiomatic properties with \Fgap{} (cf.\ Table~\ref{tab:Results})
except $\ell$-Stability and Resistance to Clones.
Another advantage of \TopFgapSK{} is that the parameter~$k$ is difficult to 
choose for \Fgap{}, whereas it is very reasonable to conservatively pick a large $k$-value for \TopFgapSK{}. Choosing~$k$ too large simply diminishes the differences between \TopFgapSK{} and \ispav{k}.

\item Finally, 
Theorem~\ref{Class:Threshold-thm} shows that \Threshold{} rules are the only rules satisfying the Independence axiom. Therefore, if the selection of alternatives should be independent from each other, then clearly a \Threshold{} rule should be chosen.
For example, the inclusion in the Baseball Hall of Fame should depend on the quality of a player and not on the quality of the other candidates.
In our experiments, we have seen that the related class of \MSThreshold{} rules has advantages over \Threshold{} rules. The difference between these two classes, however, is only relevant if the maximum score of candidates differs between elections for reasons unrelated to the candidates' quality. This was the case, e.g., in the Hugo data set, where the relative maximum 
approval score varied significantly between award categories.
\end{itemize}

\medskip
These recommendations are applicable to most shortlisting scenarios.
There are, however, possible variations of our shortlisting framework
that require further analysis in the future.
For example, while strategyproofness is usually not important in election with
independent experts, there are some shortlisting applications with a more open electorate
where this may become an issue~\citep{HugoManip,bredereck2018coalitional}.
We have not considered strategic voting in this paper and assume that this viewpoint will
give rise to different recommendations.
Moreover, it may be worth investigating whether using ordinal preferences (rankings) instead of
approval ballots can increase the quality of the shortlisting process (shortlisting rules for ordinal preferences can be found, e.g., in the works of \citet{elkind2017multiwinner,ijcai/mw-condorcet,Faliszewski2017MultiwinnerVoting,elkind2017properties}).
In general, the class of variable multi-winner rules (and social dichotomy functions) deserves further attention
as many fundamental questions (concerning proportionality, axiomatic classifications, algorithms, etc.) are still unexplored.

\section*{Acknowledgments}
This work was supported by the Austrian Science Fund (FWF): P31890 and J4581

\clearpage
\appendix
\section{Proof of Claim~\ref{claim1} in Proposition~\ref{prop:qNCSA}}
\textbf{Claim~\ref{claim1}\ \ }
Let $E$ be an election. Then if $\score_E(c_i) = \score_E(c_{i+1})$
and \[\score_E^{\qNCSA}(\{c_1, \dots, c_{i-1}\}) \leq \score_E^{\qNCSA}(\{c_1, \dots, c_{i}\})\]
then also 
\[\score_E^{\qNCSA}(\{c_1, \dots, c_{i}\}) \leq \score_E^{\qNCSA}(\{c_1, \dots, c_{i+1}\}).\]

\begin{proof}
Assume $\score_E(c_i) = \score_E(c_{i+1})$
and $\score_E^{\qNCSA}(\{c_1, \dots, c_{i-1}\}) \leq \score_E^{\qNCSA}(\{c_1, \dots, c_{i}\})$.
To enhance readability, we write $x$ for $\sum_{k=1}^{i-1}(2\score_E(c_k) - n)$ and $y$ for
$2\score_E(c_i) - n$. Then, we can write 
\[\score_E^{\qNCSA}(\{c_1, \dots, c_{i-1}\}) \leq \score_E^{\qNCSA}(\{c_1, \dots, c_{i}\})\]
as 
\[\frac{x}{\lvert \{c_1, \dots,c_{i-1}\}\rvert^q} =\frac{x}{(i-1)^q} \leq
\frac{x+y}{i^q} = \frac{x+y}{\lvert \{c_1, \dots,c_{i}\}\rvert^q}\]
Now, defining $z$ as $i^q - (i-1)^q$, we can rewrite this as 
\[\frac{x}{(i-1)^q} \leq \frac{x+y}{(i-1)^q + (i^q - (i-1)^q)} =
\frac{x+y}{(i-1)^q + z}\]
Then we can do the following computation:
\begin{align*}
\frac{x}{(i-1)^q} &\leq \frac{x+y}{(i-1)^q + z}\\
x((i-1)^q+z) &\leq (x+y)(i-1)^q\\
x(i-1)^q+xz &\leq x(i-1)^q + y(i-1)^q\\
xz &\leq y(i-1)^q\\
xz +yz&\leq y(i-1)^q + yz\\
z(x +y)&\leq y((i-1)^q + z)\\
z\frac{x+y}{(i-1)^q +z} &\leq y\\
x+y +z\frac{x+y}{(i-1)^q +z} &\leq x+y+y\\
\frac{(x+y)((i-1)^q+z)}{(i-1)^q+z} +z\frac{x+y}{(i-1)^q +z} &\leq x+2y\\
\frac{x+y}{(i-1)^q+z}((i-1)^q+2z) &\leq x+2y\\
\frac{x+y}{(i-1)^q+z} &\leq \frac{x+2y}{((i-1)^q+2z)}\\
\end{align*}
Now replacing $x,y,z$ again by their respective definition we get
for the left-hand side:
\begin{align*}
\frac{\sum_{k=1}^{i-1}(2\score_E(c_k) - n) + (2\score_E(c_i) - n)}{(i-1)^q +(i^q - (i-1)^q)} &=
\frac{\sum_{k=1}^{i}(2\score_E(c_k) - n)}{i^q} \\&= \score_E^{\qNCSA}(\{c_1, \dots, c_{i}\})
\end{align*}
Observe that, by definition, we have  $y =2\score_E(c_{i}) - n = 2\score_E(c_{i+1}) - n$.
Therefore, we can write the right-hand side as
\[\frac{\sum_{k=1}^{i-1}(2\score_E(c_k) - n) + (2\score_E(c_i) - n) +
(2\score_E(c_{i+1}) - n)}{(i-1)^q +(i^q - (i-1)^q) + z} =
\frac{\sum_{k=1}^{i+1}(2\score_E(c_k) - n)}{i^q + z} 
\]
Now, we claim that because $0 \leq q \leq 1$ we have 
\[z = i^q - (i-1)^q \geq (i+1)^q - i^q.\] 
We observe that the both sides of the equation equal the change of the function
$x^q$ in an interval of one. Because the derivative of $f(x)=x^q$ for $0 \leq q \leq 1$
is monotone declining, we can bound this change using the slope of $f(x)$
in either the starting or end point of the interval as follows
\[1 \cdot f'(x+1) \leq f(x+1) - f(x) \leq 1 \cdot f'(x).\] 
Therefore, we have 
\[i^q - (i-1)^q \geq f'(i) \geq (i+1)^q - i^q\]
It follows that 
\[
\frac{\sum_{k=1}^{i+1}(2\score_E(c_k) - n)}{i^q + z} \leq 
\frac{\sum_{k=1}^{i+1}(2\score_E(c_k) - n)}{i^q + (i+1)^q - i^q} =
\score_E^{\qNCSA}(\{c_1, \dots, c_{i+1}\}).\]
All together we have shown
\[\score_E^{\qNCSA}(\{c_1, \dots, c_{i}\}) \leq \score_E^{\qNCSA}(\{c_1, \dots, c_{i+1}\}).\]
This concludes the proof.

\end{proof}

\end{document}